\definecolor{shadecolor}{rgb}{0.9, 0.9, 0.86}
\def \wt{\widetilde}
\def \wh{\widehat}
\def\Hom{\mathrm{Hom}}
\def\C{\mathbb{C}}
\def\R{\mathbb{R}}
\def\Z{\mathbb{Z}}
\def\1{\mathbf{1}}
\def\b{\beta}
\def\a{\alpha}
\def\L{\Lambda}
\def\d{\mathrm d}
\def\O{\mathcal{O}}
\def\e{\epsilon}
\def\i{\mathrm{i}}
\def\pa{\partial}
\def\s{\sigma}
\def\Mgn{\overline{\mathcal{M}}_{g,n}}
\def\ll{\lambda}
\def\Y{\mathscr{P}}
\def\be{\begin{equation}}
\def\ee{\end{equation}}
\newtheorem{theorem}{Theorem}[section]
\newtheorem{conjecture}{Conjecture}[section]
\newtheorem{lemma}[theorem]{Lemma}
\newtheorem{remark}[theorem]{Remark}
\newtheorem{proposition}[theorem]{Proposition} 
\newtheorem{corollary}[theorem]{Corollary}
\begin{document}

\numberwithin{equation}{section}

\title{On the spectral problem of the quantum KdV hierarchy}
\date{}

\author{Giulio Ruzza\footnote{\text{IRMP, UCLouvain, 1348 Louvain-la-Neuve, Belgium;} \texttt{giulio.ruzza@uclouvain.be}}, \quad Di Yang\footnote{\text{School of Mathematical Sciences, USTC, 230026 Hefei, P.R. China;} \texttt{diyang@ustc.edu.cn}}}

\maketitle

\begin{abstract}
The spectral problem for the quantum dispersionless Korteweg--de Vries (KdV) hierarchy, aka the quantum Hopf hierarchy, is solved by Dubrovin.
In this article, following Dubrovin, we study Buryak--Rossi's quantum KdV hierarchy. 
In particular, we prove a symmetry property and a non-degeneracy property for the quantum KdV Hamiltonians.
On the basis of this we construct a complete set of common eigenvectors.
The analysis underlying this spectral problem implies certain vanishing identities for combinations of characters of the symmetric group.
We also comment on the geometry of the spectral curves of the quantum KdV hierarchy and we give a representation of 
the quantum dispersionless KdV Hamiltonians in terms of multiplication operators in the class algebra of the symmetric group.
\end{abstract}

\section{Introduction}

Let $\L:=\C[q_1,q_2,q_3,\dots]$ denote the ring of polynomials of the indeterminates $q_1,q_2,q_3,\dots$ with complex coefficients.
By a \emph{quantum integrable hierarchy} we mean a sequence of pairwise-commuting 
linear operators $\left(H_m\right)_{m\in I}$ on $\L\otimes R$ or on some completion of it,
satisfying certain non-degeneracy property (cf. e.g.~\cite{Bazhanov1,Bazhanov2,Bazhanov3,BSTV2016,BR2016a,D2016}).
Here $I$ is an infinite index set and $R$ is a certain ring.

To illustrate this subject and the context of this paper, let us start by recalling the definition of the classical KdV hierarchy 
and its Hamiltonian structures.
Denote by $(\mathcal{A}_u,\partial_x)$ the differential polynomial ring of~$u$, namely an element of~$\mathcal{A}_u$ is a polynomial of 
$u_x,u_{xx},u_{xxx},\dots$ whose coefficients are formal power series in~$u$ with complex coefficients. 
Define a sequence of elements $h_{-1}^{\sf cl},h_0^{\sf cl},h_1^{\sf cl},h_2^{\sf cl}$, \dots in~$\mathcal{A}_u[\e^2]$
by the \emph{Lenard-Magri recursion}:
\begin{align}
& (2m+3) \partial_x h_m^{\sf cl} = \biggl(2 u\partial_x + u_x + \frac{\e^2}4 \partial_x^3 \biggr) h_{m-1}^{\sf cl} , \quad m\geq 0,\\
& h_{-1}^{\sf cl} = u, \quad h_m^{\sf cl}|_{u=u_x=u_{xx} =\dots =0} =0,
\end{align}
as well as
\be
\sum_{i\geq 1} i u_{ix} \frac{\pa h_m^{\sf cl}}{\pa u_{ix}} - \e \frac{\pa h_m^{\sf cl}}{\pa \e} =0, \quad m\geq 0.
\ee
Here, $u_{ix} := \pa^i_x(u)$, $i\geq 0$. Explicitly,
\be
h_0^{\sf cl}=\frac{u^2}2+\frac{\e^2}{12} u_{xx}\,, 
\quad h_1^{\sf cl} = \frac{u^3}{6} + \e^2 \Bigl(\frac1{24} u_x^2+\frac1{12} u u_{xx} \Bigr) + \frac{\e^4}{240} u_{xxxx}\,,
\ee
etc. For an element $a\in \mathcal{A}_u[\e]$, we denote by $\overline{a}$ 
the projection of~$a$ onto $\mathcal{A}_u[\e] / \partial_x \mathcal{A}_u[\e]$. 
Elements in $\mathcal{A}_u[\e] / \partial_x \mathcal{A}_u[\e]$ can be called 
{\it local functionals}.  The \emph{classical KdV hierarchy} is the following system of commuting Hamiltonian evolutionary PDEs:
\begin{align}
\frac{\partial u}{\partial t_m} = \mathcal P_1 \Biggl(\frac{\delta \overline{h_m^{\sf cl}}}{\delta u(x)} \Biggr) 
=  \frac1{2m+1}\mathcal P_2 \Biggl(\frac{\delta \overline{h_{m-1}^{\sf cl}}}{\delta u(x)} \Biggr) , \quad m\geq 0, \label{KdV}
\end{align}
where $\frac{\delta}{\delta u(x)}$ denotes the variational derivative, and 
$\mathcal P_1,\mathcal P_2$ are two Hamiltonian operators given by 
\be
\mathcal P_1 =\partial_x,  \quad \mathcal P_2=2 u\partial_x + u_x + \frac{1}4 \e^2 \partial_x^3,
\ee
respectively. The local functionals $\overline{h_m^{\sf cl}}$, $m\geq -1$ serve as the Hamiltonians.
The Hamiltonian operators $\mathcal P_a$, $a=1,2$,
define two Poisson structures on the space of local 
functionals~$\mathcal{A}_u[\e] / \partial_x \mathcal{A}_u[\e]$ in the following way: 
for any $\overline{f}, \overline{g}\in \mathcal{A}_u[\e] / \partial_x \mathcal{A}_u[\e]$, 
\begin{align}
\left\{\overline{f},\overline{g}\right\}_a :=  \overline{\frac{\delta  \overline{f}}{ \delta u(x)} \mathcal P_a \biggl(\frac{\delta \overline{g}}{ \delta u(x)}\biggr)} \, \in \mathcal{A}_u[\e] / \partial_x \mathcal{A}_u[\e],\quad a=1,2. \label{Poisson12}
\end{align}
The commutativity of KdV flows~\eqref{KdV} can then be viewed equivalently from either
\begin{align}
\label{Poisson1}
\left\{ \overline{h_m^{\sf cl}} \,,\, \overline{h_n^{\sf cl}} \right\}_1 &= 0 , \quad \forall\,m,n\geq 0 ,
\intertext{or} 
\label{Poisson2}
\left\{ \overline{h_{m-1}^{\sf cl}} \,,\,  \overline{h_{n-1}^{\sf cl}} \right\}_2 &= 0 , \quad \forall\,m,n\geq 0 .
\end{align}

Quantization of the KdV hierarchy with respect to the first\footnote{Quantization of the KdV hierarchy with respect to the second Poisson structure $\{,\}_2$ 
was considered by Bazhanov, Lukyanov and Zamalodchikov~\cite{Bazhanov1,Bazhanov2,Bazhanov3}. 
} 
Poisson structure $\{,\}_1$ was considered by Buryak and Rossi~\cite{BR2016a}, Dubrovin~\cite{D2016} and Eliashberg~\cite{E2007}.
 Let us now give a brief review. 
Consider the coordinate functional~$u(x)$, and write it using the Fourier modes representation:
\begin{equation}\label{fmodes}
u(x) = \sum_{k\in \mathbb{Z}} U_k {\rm e}^{\i kx}.
\end{equation}
Here and below, we will consider $x$ as a variable in $S^1$, and for any $f=f(u;u_x,u_{xx},\dots;\e)\in \mathcal{A}_u[\e]$, we  
regard $\overline{f}$ as $\int_{S^1}f(u;u_x,u_{xx},\dots;\e)\,\d x$ (hereafter $\int_{S^1}$ denotes $\frac 1{2\pi}\int_0^{2\pi}$, 
the operator extracting the zero Fourier mode).
In terms of the Fourier modes, the first Poisson structure in~\eqref{Poisson12} reads as follows:
\be
\label{PB}
\{U_k, U_l\}_1 = \i k \delta_{k+l,0} \,,\quad \forall\,k,l\in\mathbb{Z}.
\ee
The canonical quantizations with respect to $\{\cdot,\cdot\}_1$ for the space of local functionals 
$\mathcal{A}_u[\e]/\partial_x \mathcal{A}_u[\e]$ are defined as the following linear operators on~$\Lambda[[\e]]$:
\begin{align}
 \wh U_k &= \begin{cases} q_k, & k>0, \\ 
  U_0\, {\rm id}, & k=0, \\
  -\hbar k  \frac{\partial }{\partial q_{-k}}, & k < 0, \\ 
  \end{cases} \\
 \wh {\overline{a}} &= \, \, \, :  \int_{S^1} a(u(x);u_x(x),u_{xx}(x),\dots;\e)\, \d x : \, , \quad \forall\,a\in \mathcal{A}_u[\e], 
\end{align}
where $U_0$ and $\hbar$ are parameters and $:\,:$ denotes the normally ordered product. The normal ordering here means the following: we first move all $U_k$ with $k>0$ to the left of the $U_k$ with $k<0$ in each monomial of~$a$, and then we replace each $U_k$ by~$\wh U_k$.
In such a canonical quantization, the Poisson bracket \eqref{PB} is replaced 
by $\frac 1{\i\hbar}[\cdot,\cdot]$, where $[\cdot,\cdot]$ denotes the commutator;
indeed, 
$$
\left[\wh U_k, \wh U_l\right] = -\hbar k \delta_{k+l,0},\qquad k,l\in \mathbb{Z}.
$$
It turns out that the Poisson commutativity~\eqref{Poisson1} does not hold after the quantization, namely $\left[\widehat{\overline{h_m^{\sf cl}}},\widehat{\overline{h^{\sf cl}_n}}\right]$ do not vanish in general. 
The quantization problem, suggested by Dubrovin, is to find an
$\hbar$-deformation~$\overline{h_m}$ of~$\overline{h_m^{\sf cl}}$, such that 
\be\label{commquantumkdv}
 \left[\widehat{\overline{h_m}}\,,\, \widehat{\overline{h_n}} \right] =0, \quad \forall\,m,n\geq 0.
\ee
For the case that the dispersive parameter~$\e$ vanishes, the existence of such an $\hbar$-deformation was 
obtained by Eliashberg~\cite{E2007}, in the setting of Symplectic Field Theory~\cite{EGH2000}, via an explicit generating series (cf.~\eqref{Eliashbergformula} below); for an arbitrary~$\e$, the existence is obtained 
via a concrete construction by Buryak and Rossi~\cite{BR2016a} (cf.~\eqref{qkdvHamiltonians1}-\eqref{qkdvHamiltonians2} below). 
Let us denote by $h_m^{\sf DR}$ the densities of the quantum Hamiltonians constructed by Buryak and Rossi, which of course satisfy  
$\overline{h_m^{\sf DR}}|_{\hbar =0}=\overline{h_m^{\sf cl}}$.
Here we note that adding a total $x$-derivative in the density of a local functional does not affect the quantization of the functional.
Below, for simplify we denote Buryak--Rossi's quantum KdV Hamiltonians~$\widehat{\overline{h_m^{\sf DR}}}$ as~$H_m$. 

In this paper, we consider the eigenvalue problem for the operators~$H_m$. 
For the reader's convenience, we list below the first few $H_m$:
\begin{align}
& H_{-1} = U_0, \\
& H_0 =  \hbar\sum_{k>0}kq_k\frac{\pa}{\pa q_k}-\frac\hbar{24}+\frac {U_0^2}2, \\
& H_1 = \Delta + \hbar U_0\sum_{k\geq 1}kq_k\frac{\pa}{\pa q_k}  - \frac{\e^2\hbar}{12}\sum_{k\geq 1}k^3q_k\frac{\pa}{\pa q_k}
-\frac {\e^2\hbar}{2880} -\frac{\hbar U_0}{24} +\frac {U_0^3}6 ,
\end{align}
where 
\be\label{cutandjoin}
\Delta := \frac 12 \sum_{i,j\geq 1}\hbar(i+j)q_iq_j\frac{\pa}{\pa q_{i+j}}+\hbar^2ij q_{i+j}\frac{\pa^2}{\pa q_i\pa q_j}.
\ee
Here we note that $H_0$ is a grading operator, and that $\Delta$ can be recognized 
as the celebrated \emph{cut-and-join} operator (cf.~\cite{G1994} and Appendix~\ref{AppYJM}).
In general, write 
\be\label{genusexpansionHm}
H_m = \sum_{g\geq 0} \e^{2g} H_m^{[g]},
\ee
where $H_m^{[g]}$ are independent of $\epsilon$.
It is helpful to notice that 
for every $m\geq-1$, the operator $K_m:=H_m|_{U_0=0}/\sqrt{\hbar^{m+2}}$, after the rescaling 
\be\label{rescalingqt}
q \; \to \; q/\sqrt\hbar=:T,
\ee
depends only on the single parameter~$\sigma = -\frac{\e^2}{\sqrt\hbar}$ (cf.~Remark~\ref{remarkhomogeneous}). Here, $T=(T_1,T_2,T_3,\dots)$. 

The eigenvalue problem for $H_m$ with $\e=0$, i.e. for~$H^{[0]}_m$, was solved by Dubrovin~\cite{D2016}. 
To state his result, let us introduce some notations.  

Denote by $\Y$ the set of partitions $\ll$, namely $\ll=(\ll_1,\ll_2,\dots)\in\Y$ is a half-infinite non-increasing sequence of integers $\ll_1\geq\ll_2\geq \cdots$, called parts of $\ll$, with only finitely many non-zero parts; the number $\ell(\ll)$ of nonzero parts is called the length of the partition, and the number $|\ll|:=\sum_i\ll_i$ is called the weight of the partition. Let us denote $\Y_k$ the set of partitions $\ll$ having weight $|\lambda|=k$.
Let $s_\ll(T)$ denote the Schur polynomial associated to~$\ll$ which can be defined by 
\be
\label{JacobiTrudi}
s_\lambda(T)=\det (h_{\lambda_i-i+j}(T))_{1\leq i,j\leq \ell(\lambda)}.
\ee
Here $h_k(T),\,k\in \mathbb{Z}$ are the complete homogeneous symmetric polynomials defined via
\be
\sum_{k\geq 0} h_k(T) z^k := \exp\biggl(\sum_{k\geq 1} \frac {T_k}k z^k\biggr),\qquad h_k:=0,~{\rm if}~k<0.
\ee
For example, 
\be
s_{(2)}=\frac 12(T^2_1+T_2),\quad s_{(1,1)}=\frac 12(T_1^2-T_2),
\ee
\be
s_{(3)}=\frac 16(T_1^3+3T_1T_2+2T_3),\quad s_{(2,1)}=\frac 13(T_1^3-T_3),\quad s_{(1,1,1)}=\frac 16(T_1^3-3T_1T_2+2T_3).
\ee
The collection of $s_\ll(T)$ for all partitions form a basis of $\C\left[T_1,T_2,\dots\right]=:\widetilde{\Lambda}$ over~$\C$ (see e.g.~\cite{M2015}). 
We also recall that the standard inner product $\langle \,, \,\rangle:\widetilde{\Lambda}\times\widetilde{\Lambda}\to\C$ is 
defined by (see e.g.~\cite{M2015})
\be
\label{eq:schurorthogonal}
\left\langle s_\ll(T),s_\mu(T)\right\rangle=\delta_{\ll,\mu}.
\ee

Moreover, it is convenient to recall the following functions of partitions:
for any $j\geq 0$ and any partition $\ll\in\Y$, define \cite{BO2000,CMZ2018,D2016,Z2016}
\be
\label{P}
P_j(\ll)=\sum_{i=1}^{\ell(\ll)}\left[\left(\lambda_i-i+\frac{1}{2}\right)^j-\left(-i+\frac{1}{2}\right)^j\right]
\ee
and a differently normalized set of functions $Q_j$, for $j\geq 0$, related to the $P_j$'s by
\be
\label{Q}
Q_0(\ll)=1,\qquad Q_j(\ll)=\frac{P_{j-1}(\ll)}{(j-1)!}+\beta_j\quad\mbox{if }j\geq 1,
\ee
where $\beta_j$, for $j\geq 0$, are defined by the generating series
\be
\label{eq:beta}
\frac {z/2}{\sinh(z/2)} = \sum_{j\geq 0}\beta_jz^j=1-\frac{z^2}{24}+\frac{7z^4}{5760}-\frac{31z^6}{967680}+\cdots,
\ee
or, equivalently, $\beta_j=\left(\frac 1{2^{j-1}}-1\right)B_j/j!$, $B_j$ being the $j_{\rm th}$ Bernoulli numbers.
The functions $Q_j$, for $j\geq 0$, given in \eqref{Q} are the generators of the algebra of \emph{shifted symmetric polynomials} \cite{BO2000}. They are building blocks of the Bloch--Okounkov theorem and have a famous relation to quasimodular forms \cite{BO2000,CMZ2018,Z2016} which we hope to come back to in a later publication.

\begin{theorem}[Dubrovin~\cite{D2016}]
\label{thmdubrovin}
The following formulae hold true:
\be
H^{[0]}_m s_\ll\Bigl(\frac q{\sqrt\hbar}\Bigr)=E^{[0]}_{m}(\ll;\hbar,U_0) s_\ll\Bigl(\frac q{\sqrt\hbar}\Bigr), \quad m\geq-1
\ee
with the eigenvalues given explicitly by
\begin{align}
\nonumber
E^{[0]}_{m}(\ll;\hbar,U_0)&=
\sum_{j=0}^{m+2}\frac{\sqrt{\hbar^{j}}\beta_{j} U_0^{m+2-j}}{(m+2-j)!}+
\sqrt\hbar\sum_{i\geq 1}
\frac{\left(U_0+\sqrt\hbar\left(\ll_i+\frac 12-i\right)\right)^{m+1}-
\left(U_0+\sqrt\hbar\left(\frac 12-i\right)\right)^{m+1}}{(m+1)!}
\\
\label{dispersionlesseigenvalues}
&=\sum_{j=0}^{m+2}\frac{\sqrt{\hbar^{j}}U_0^{m+2-j}}{(m+2-j)!}Q_j(\ll),
\end{align}
where $Q_j(\ll)$ are defined in \eqref{Q}.
\end{theorem}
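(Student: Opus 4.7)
My strategy is to diagonalize $H_m^{[0]}$ via the boson-fermion correspondence, which realizes Schur polynomials as the Maya-diagram basis vectors of the fermionic Fock space and the dispersionless quantum KdV Hamiltonians as Cartan elements of $W_{1+\infty}$, hence manifestly diagonal in that basis.

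First I would identify $H_m^{[0]}$ in closed form. The Lenard-Magri recursion at $\e=0$ yields $h_m^{\sf cl}|_{\e=0}=u^{m+2}/(m+2)!$, so Buryak-Rossi's density $h_m^{\sf DR}|_{\e=0}$ differs from $u^{m+2}/(m+2)!$ only by $\pa_x$-exact terms and $\hbar$-corrections that contribute scalar (i.e.\ $\ll$-independent) constants to the integrated local functional. Under the standard bosonization $\wt{\L}\simeq$ (charge-zero fermionic Fock space), $s_\ll\leftrightarrow|\ll\rangle$, the Heisenberg current $\a(x):=u(x)-U_0$ is the normally ordered bilinear $:\!\psi^*(x)\psi(x)\!:$, so $\int_{S^1}:\!(U_0+\a(x))^{m+2}/(m+2)!\!:\,\d x$ is a Cartan element of $W_{1+\infty}$, simultaneously diagonal on $\{|\ll\rangle\}_{\ll\in\Y}$.

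Its eigenvalue on $|\ll\rangle$ is then read off the filled single-particle energies $U_0+\sqrt\hbar(\ll_i-i+\tfrac12)$. Packaged as a generating series, the eigenvalue of $\sum_{m\geq-1}z^{m+1}H_m^{[0]}$ on $s_\ll$ reads
\be
\sqrt\hbar\,\mathrm e^{zU_0}\!\left(\sum_{i\geq 1}\Bigl[\mathrm e^{z\sqrt\hbar(\ll_i-i+\tfrac12)}-\mathrm e^{z\sqrt\hbar(\tfrac12-i)}\Bigr] + \frac{1}{2\sinh(z\sqrt\hbar/2)}\right) - \frac 1z,
\ee
where the first (telescoping) sum converges because $\ll_i=0$ for all but finitely many $i$, and the closed-form vacuum piece uses $\sum_{i\geq 1}\mathrm e^{-z\sqrt\hbar(i-\tfrac12)}=\tfrac1{2\sinh(z\sqrt\hbar/2)}$ (via $\zeta$-regularization of the otherwise divergent vacuum sum). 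Extracting the coefficient of $z^{m+1}$ reproduces the first formula for $E_m^{[0]}(\ll;\hbar,U_0)$ in~\eqref{dispersionlesseigenvalues}: the telescoped sum yields the second summand, while the $\sinh$-term expands through the $\b_j$-generating series~\eqref{eq:beta} into the first summand (the $-1/z$ cancels the $z^{-1}$-residue so that the $z^0$-coefficient correctly gives $E_{-1}=U_0$). The rewriting in terms of $Q_j(\ll)$ is then immediate from~\eqref{Q}.

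I expect the main obstacle to be the rigorous identification of Buryak-Rossi's $h_m^{\sf DR}|_{\e=0}$ with the normally ordered $u^{m+2}/(m+2)!$ up to the correct $\b_j$-type scalar constants (their construction passes through intersection numbers on double ramification cycles in $\Mgn$, so this step is non-trivial), together with a clean treatment of the $\zeta$-regularized vacuum sum producing the closed-form $1/(2\sinh)$ expression. Once these two ingredients are in place, the boson-fermion eigenvalue computation is essentially standard in the theory of $W_{1+\infty}$.
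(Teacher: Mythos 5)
Your overall route (bosonization, identification of the Hamiltonians with diagonal operators on the fermionic Fock space, $\zeta$-regularized vacuum sum, extraction of the $z$-coefficients through the $\b_j$-series) is essentially Dubrovin's, and your final generating-function expression does reproduce \eqref{dispersionlesseigenvalues} correctly --- note that the paper itself imports Theorem~\ref{thmdubrovin} from \cite{D2016} without proof, so the only internal point of comparison is the generating function \eqref{Eliashbergformula} and its eigenvalue form \eqref{eigenvalue2} in Appendix~\ref{AppYJM}. However, there is a genuine gap in your first step, and it is not merely a technical one: the claim that $h_m^{\sf DR}|_{\e=0}$ differs from $u^{m+2}/(m+2)!$ only by $\pa_x$-exact terms and $\ll$-independent scalar constants is false for $m\geq 2$. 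The correct statement is Eliashberg's formula \eqref{Eliashbergformula}: each Fourier mode $U_k$ inside the exponential is dressed by the factor $S(\sqrt\hbar zk)$ (with $S$ as in \eqref{S}), and there is the overall prefactor $1/S(\sqrt\hbar z)$. Only the prefactor produces scalar ($\b_j$-type) constants; the mode-by-mode dressing produces genuine operator corrections. Concretely, expanding $\frac{z^2}{2}\int_{S^1}:\bigl(\sum_k S(\sqrt\hbar zk)U_k{\rm e}^{\i kx}\bigr)^2:\d x$ to order $z^4$ shows that $H_2^{[0]}$ contains a correction proportional to $\hbar^2\sum_{k>0}k^3q_k\pa_{q_k}$ relative to $:\int u^4/24:$, which is neither a constant nor a total derivative. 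These corrections are exactly the ``completed cycle'' terms: without them the naive operators $:\int u^{m+2}/(m+2)!:$ are \emph{not} mutually commuting and \emph{not} diagonal on the Schur basis (this is the whole content of the quantization problem described in the Introduction), so your assertion that $\int_{S^1}:(U_0+\a(x))^{m+2}/(m+2)!:\d x$ is a Cartan element of $W_{1+\infty}$ fails, and the eigenvalue you would read off from the single-particle energies would belong to a different operator.

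The fix is to replace your step 1 by the identification of the full generating series: one must show (or quote from \cite{BR2016a,E2007}) that $\mathcal H^{[0]}(z)$ equals \eqref{Eliashbergformula}, and then verify that under the boson--fermion correspondence the \emph{dressed} exponential $\frac{1}{S(\sqrt\hbar z)}:\int{\rm e}^{zS(\i\sqrt\hbar z\pa_x)u}\d x:$ --- not the undressed one --- coincides with ${\rm e}^{zU_0/\sqrt\hbar}$ times the diagonal fermionic operator $\sum_{k\in\Z+\frac12}{\rm e}^{z\sqrt\hbar k}:\psi_k\psi_k^*:$ (suitably regularized). Once that identification is in place, your eigenvalue computation, the telescoping sum over the parts $\ll_i$, the $\frac{1}{2\sinh}$ vacuum contribution, and the matching with the $\b_j$-series are all correct and yield \eqref{dispersionlesseigenvalues}; but as written, the argument computes the spectrum of the wrong family of operators.
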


\smallskip

For a general $\e$, to study the eigenvalue problem for~$H_m$, let us make two observations. 
The first one is a symmetry property (cf.~Corollary~\ref{corollarysymmetry}) 
of the quantum KdV Hamiltonians,  
and the second one is that the spectrum of the quantum dispersionless KdV Hamiltonians is {\it non-degenerate} (cf.~Lemma \ref{nondegeneracylemma}).
Based on these observations we have the following proposition.

\begin{shaded}
\begin{proposition}
\label{prop:existencedeformedschur}
There exists a unique collection $\left(r_\ll(T;\sigma)\right)_{\ll\in\Y}$ of elements in the free module $\widetilde{\Lambda}\otimes_\C\C\llbracket\sigma\rrbracket$ over $\C\llbracket\sigma\rrbracket$ such that
\begin{align}
\label{1}
\left\langle r_\ll(T;\sigma),s_\ll(T)\right\rangle&=1,
\\
\label{2}
H_m r_\ll\Bigl(\frac q{\sqrt\hbar};-\frac{\e^2}{\sqrt\hbar}\Bigr)&=E_m\left(\ll;\e,\hbar,U_0\right)r_\ll\Bigl(\frac q{\sqrt\hbar};-\frac{\e^2}{\sqrt\hbar}\Bigr)
\end{align}
for some $E_m(\ll;\e,\hbar,U_0)\in\C\bigl(\sqrt\hbar,U_0\bigr)\llbracket\e^2\rrbracket$.
Here $H_m$ are Buryak--Rossi's quantum KdV Hamiltonians, cf. \eqref{qkdvHamiltonians1}-\eqref{qkdvHamiltonians2}.
Moreover, $\left(r_\ll(T;\sigma)\right)_{\ll\in\Y}$ is a basis of the free module $\widetilde{\Lambda}\otimes_\C\C\llbracket\sigma\rrbracket$, and 
\be
\label{3}
r_\ll(T;\sigma=0)=s_\ll(T), \qquad E_m(\ll;\e=0,\hbar)=E_m^{[0]}(\ll;\hbar),
\ee 
where $E_m^{[0]}$ are defined in~\eqref{dispersionlesseigenvalues}.
\end{proposition}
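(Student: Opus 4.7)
The plan is to construct $r_\ll$ and the eigenvalues $E_m$ order by order in $\sigma$ by Rayleigh--Schr\"odinger-style perturbation theory, using Theorem~\ref{thmdubrovin} for the base case and the non-degeneracy Lemma~\ref{nondegeneracylemma} to invert the unperturbed operator on the complement of $\C s_\ll$ in $\widetilde\Lambda$; consistency across the commuting family $(H_m)_{m\geq -1}$ will rest on Corollary~\ref{corollarysymmetry}.

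After the rescaling \eqref{rescalingqt}, the genus expansion \eqref{genusexpansionHm} shows that each $H_m$ depends polynomially on $\sigma=-\e^2/\sqrt\hbar$ and acts on $\widetilde\Lambda\otimes\C\llbracket\sigma\rrbracket$. Write the ansatzes $r_\ll=\sum_{k\geq 0}\sigma^k r_\ll^{(k)}$, with $r_\ll^{(k)}\in\widetilde\Lambda$, and $E_m=\sum_{k\geq 0}\sigma^k E_m^{(k)}(\ll)$, with $E_m^{(k)}\in\C(\sqrt\hbar,U_0)$. Theorem~\ref{thmdubrovin} says that $H_m|_{\sigma=0}$ is diagonal in the Schur basis with spectrum $\{E_m^{[0]}(\mu;\hbar,U_0)\}_{\mu\in\Y}$; the order-zero data is thus forced, namely $r_\ll^{(0)}=s_\ll$ and $E_m^{(0)}(\ll)=E_m^{[0]}(\ll;\hbar,U_0)$, consistently with \eqref{3}. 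Inducting on $k\geq 1$ and assuming $(r_\ll^{(j)}, E_m^{(j)})_{j<k}$ already known, matching the coefficient of $\sigma^k$ in $H_m r_\ll=E_m r_\ll$ yields
\be
\bigl(H_m|_{\sigma=0} - E_m^{[0]}(\ll)\bigr)\, r_\ll^{(k)} \;=\; E_m^{(k)}(\ll)\, s_\ll \;+\; \Psi_{m,\ll}^{(k)},
\ee
where $\Psi_{m,\ll}^{(k)}\in\widetilde\Lambda$ is an explicit expression in the lower-order data. Expanding $r_\ll^{(k)}=\sum_\mu a_{\ll,\mu}^{(k)}\, s_\mu$, the $s_\mu$-component reads $\bigl(E_m^{[0]}(\mu)-E_m^{[0]}(\ll)\bigr)\, a_{\ll,\mu}^{(k)} = E_m^{(k)}(\ll)\,\delta_{\mu,\ll} + \langle\Psi_{m,\ll}^{(k)}, s_\mu\rangle$. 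For $\mu=\ll$ the LHS vanishes, so the projection onto $s_\ll$ determines $E_m^{(k)}(\ll) = -\langle\Psi_{m,\ll}^{(k)}, s_\ll\rangle$; for $\mu\neq\ll$, Lemma~\ref{nondegeneracylemma} furnishes an index $m$ with $E_m^{[0]}(\mu)\neq E_m^{[0]}(\ll)$, and $a_{\ll,\mu}^{(k)}$ is obtained by dividing. The remaining coefficient $a_{\ll,\ll}^{(k)}$ is unconstrained by the spectral equation and is pinned to $0$ (for $k\geq 1$) by the normalization \eqref{1}.

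The main obstacle is the internal consistency of this construction across the whole family: the witness index $m$ used to compute $a_{\ll,\mu}^{(k)}$ is non-unique, and the resulting $r_\ll^{(k)}$ must then satisfy the spectral equation for every other $m$. Both issues reduce to the commutativity $[H_m,H_n]=0$, which at each order in $\sigma$ produces quadratic compatibility identities relating the $\sigma$-coefficients of the various $H_m$'s; combined with the self-adjointness of $H_m$ on $\widetilde\Lambda$ relative to $\langle\,,\,\rangle$ supplied by Corollary~\ref{corollarysymmetry}, these identities close the system consistently and reduce the choice-of-witness ambiguity to the non-zero denominator. Once $(r_\ll)_{\ll\in\Y}$ has been built, the basis claim is automatic: since $r_\ll\equiv s_\ll\pmod{\sigma}$, the transition matrix from $(s_\ll)$ to $(r_\ll)$ is the identity modulo $\sigma$ and hence invertible over $\C\llbracket\sigma\rrbracket$.
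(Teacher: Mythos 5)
Your overall strategy --- order-by-order perturbation in $\sigma$, base case from Theorem~\ref{thmdubrovin}, division by $E_m^{[0]}(\ll)-E_m^{[0]}(\mu)$ justified by Lemma~\ref{nondegeneracylemma}, normalization \eqref{1} fixing the diagonal coefficient, and invertibility of the transition matrix modulo $\sigma$ for the basis claim --- is exactly the paper's, and those parts are fine. The problem is the step you yourself identify as ``the main obstacle'': you assert that the commutativity $[H_m,H_n]=0$ produces ``quadratic compatibility identities'' which, together with self-adjointness, ``close the system consistently,'' but you never exhibit these identities or show that they do the job. This is the actual content of the proposition --- that a single collection $(r_\ll)$ simultaneously diagonalizes \emph{every} $H_m$ --- and a one-sentence appeal to commutativity does not establish it. Carrying out the order-by-order compatibility check you gesture at would be genuinely messy, and nothing in your write-up indicates how it would go.

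The paper sidesteps the order-by-order consistency check entirely. For each weight $w$ it fixes one index $m_*$ for which Lemma~\ref{nondegeneracylemma} gives full separation of the unperturbed eigenvalues on $\Y_w$, and runs the recursion \emph{only} for $m_*$; this already determines $r_\ll$ and $F_{m_*}(\ll;\s)$ uniquely and proves $K_{m_*}r_\ll=F_{m_*}(\ll)r_\ll$ exactly (not order by order). It then uses self-adjointness to get $F_{m_*}(\mu)\langle r_\ll,r_\mu\rangle=\langle r_\ll,K_{m_*}r_\mu\rangle=\langle K_{m_*}r_\ll,r_\mu\rangle=F_{m_*}(\ll)\langle r_\ll,r_\mu\rangle$, hence $\langle r_\ll,r_\mu\rangle=0$ for $\ll\neq\mu$ and $(r_\ll)$ is an orthogonal basis. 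Finally, for an arbitrary $m$, the chain
\begin{equation*}
F_{m_*}(\mu)\left\langle K_m r_\ll,r_\mu\right\rangle=\left\langle K_m r_\ll,K_{m_*}r_\mu\right\rangle=\left\langle K_{m_*}K_m r_\ll,r_\mu\right\rangle=\left\langle K_m K_{m_*}r_\ll,r_\mu\right\rangle=F_{m_*}(\ll)\left\langle K_m r_\ll,r_\mu\right\rangle
\end{equation*}
forces $\langle K_m r_\ll,r_\mu\rangle=0$ for all $\mu\neq\ll$, and since $(r_\mu)$ is an orthogonal basis this yields $K_m r_\ll=F_m(\ll;\s)r_\ll$ for every $m$ at once. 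You should replace your consistency paragraph with an argument of this kind (or an actual derivation of the compatibility identities you invoke); as written, the proof is incomplete at its central point.
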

\end{shaded}

The proof is given in Section~\ref{sec:deformed}. 

We call $r_\ll(T;\sigma)$ in the above proposition the {\it deformed Schur polynomials}. 
For example, we have
\be\label{example2}
r_{(2)} =s_{(2)}+s_{(1,1)}\frac{4-\sqrt{16-\s^2}}\s =s_{(2)}+s_{(1,1)}\left(
-\frac{\s}{8}+\frac{\s^3}{512}-\frac{\s^5}{16384}+\frac{5 \s^7}{2097152}+\O(\s^{9})\right).
\ee
More examples can be found in Section~\ref{section32}. It should be noted that 
$\langle r_\lambda(T;\sigma), r_\mu(T;\sigma) \rangle =0$ for all $\lambda\neq\mu$ and that 
$r_{\ll'}(T;\s)=(-1)^{|\ll|}r_\ll(-T;-\s)$ (see Proposition~\ref{propositionconjugate}). Here $\ll'$ denotes 
the partition conjugate to~$\ll$. As an 
application of Proposition~\ref{prop:existencedeformedschur}, we also find 
certain vanishing identities for combinations of characters in the symmetric group (see Section~\ref{sectionvanishingid}).  

It is suggested by Dubrovin~\cite{Dubrovin} that to understand geometry of the spectrum and 
the common eigenvectors of~$H_m$ 
one needs to study the corresponding {\it spectral curves}. To define these curves, 
introduce a gradation on~$\widetilde{\Lambda}$ via the degree assignments $\deg T_k=k$, $k\ge1$. 
We have $\widetilde{\Lambda}=\bigoplus_{k\geq0}\widetilde{\Lambda}_k$, where 
elements in~$\widetilde{\Lambda}_k$ are homogeneous of degree~$k$ with respect to~$\deg$.
The commutativity~\eqref{commquantumkdv} with $n=0$ 
implies that each $K_m$ ($m\geq-1$) acts on the space $\widetilde{\L}_k$ for any $k\geq0$.
The weight~$k$ spectral curves for the quantum KdV hierarchy are then defined by 
\be
\label{spectralcurves}
\Sigma_{k,m}:=\left\lbrace(\s,\rho)\in\C^2:\ {\det}_{\widetilde{\L}_k}(K_m(\s)-\rho)=0\right\rbrace, \quad m\geq1.
\ee
We will discuss the geometric meanings of these curves after the proof of Proposition~\ref{prop:existencedeformedschur} (see Section~\ref{section32}). 
Dubrovin~\cite{Dubrovin} computes the geometric genus of the spectral curve $\Sigma_{k,1}$ up to~$k\leq 7$.
We continue his computation; below is a table of the geometric genus $g(\Sigma_{k,1})$.
\be
\begin{array}{l|cccccccccccc}
k & 0 & 1 & 2 & 3 & 4 & 5 & 6 & 7 & 8 & 9 & 10 &\cdots
\\ \hline
g(\Sigma_{k,1}) & 0 & 0 & 0 & 1 & 4 & 9 & 21 & 37 & 69 & 113 & 187 & \cdots
\end{array}
\ee
We have the following conjectural statements.

\begin{shaded}
\begin{conjecture} For any fixed $k\geq1$, 
the curves $\Sigma_{k,m}$ ($m\geq1$) are irreducible with geometric 
genus $g(\Sigma_{k,m})$ independent of~$m$ and given explicitly by the expression
 \be\label{gconj}
 g(\Sigma_{k,m})=(k-1)\left|\Y_k\right|+1-\sum_{\ll\in\Y_k}\ell(\ll),
 \ee 
where $\Y_k$ denotes the set of partitions of weight~$k$.
\end{conjecture}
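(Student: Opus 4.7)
The plan is to view $\Sigma_{k,m}$ as an $n$-sheeted branched cover of $\mathbb{P}^1_\sigma$, where $n:=|\Y_k|$, via projection onto the $\sigma$-coordinate. By Proposition~\ref{prop:existencedeformedschur} the deformed Schur polynomials $\{r_\lambda(T;\sigma)\}_{\lambda\in\Y_k}$ simultaneously diagonalize every $K_m$ restricted to $\widetilde{\Lambda}_k$, so the characteristic polynomial factors over the algebraic closure of $\C(\sigma)$ as $\prod_{\lambda\in\Y_k}(\rho-\widetilde{E}_m(\lambda;\sigma))$, where $\widetilde{E}_m(\lambda;\sigma)$ is the rescaled eigenvalue associated to $r_\lambda$. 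The non-degeneracy property underlying the construction of the $r_\lambda$ guarantees these branch values are pairwise distinct at $\sigma=0$, so the $n$ sheets are canonically labelled by $\Y_k$ and the cover is generically étale.

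First, I would establish irreducibility by showing that the monodromy group acts transitively on $\Y_k$. The example $r_{(2)}$ in \eqref{example2} already exhibits a simple square-root branch at $\sigma=\pm 4$ swapping the two sheets over $\Y_2$. More generally, for each pair of partitions $(\lambda,\mu)$ related by a suitable elementary move on Young diagrams, I would locate a finite value of $\sigma$ at which precisely the eigenvalues $\widetilde{E}_m(\lambda;\sigma)$ and $\widetilde{E}_m(\mu;\sigma)$ collide to second order, producing a simple transposition in monodromy. If the resulting family of transpositions generates the symmetric group on $\Y_k$ (which is certainly true for a sufficiently rich set of moves), transitivity follows, and the normalization of $\Sigma_{k,m}$ is connected, i.e.\ the curve is irreducible.

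For the genus, Riemann--Hurwitz gives $g(\Sigma_{k,m})=1-n+R/2$, so the conjectured formula is equivalent to a total ramification $R=2\sum_{\lambda\in\Y_k}(k-\ell(\lambda))$. A natural way to account for this count is to examine the point $\sigma=\infty$, where $-\e^2/\sqrt{\hbar}=\sigma$ makes the dispersive contribution dominant; one expects the fiber over $\sigma=\infty$ to group the sheets into cyclic ramification blocks whose indices sum to the required number, with the invariant $k-\ell(\lambda)$ appearing as the codimension of a distinguished cell associated to $\lambda$. The $m$-independence of $g$ should follow from the observation that the ramification divisor is supported on the zero locus of a discriminant that depends only on the coincidence pattern of the joint spectrum, not on the specific linear combination selected by $m$; more concretely, varying $m$ should produce birational maps of the smooth projective models that identify the sheets via the common eigenbasis $\{r_\lambda(T;\sigma)\}$.

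The main obstacle will be the analysis at $\sigma=\infty$: to pin down the ramification pattern there, one needs a sharp asymptotic description of $\widetilde{E}_m(\lambda;\sigma)$ as $\sigma\to\infty$, which probably requires going beyond the formal power series construction of Proposition~\ref{prop:existencedeformedschur} and exploiting the class-algebra representation of the quantum dispersionless Hamiltonians highlighted in the abstract. Determining precisely how the eigenvalue functions merge into cycles at infinity, and hence recovering the combinatorial quantity $\sum_\lambda\ell(\lambda)$, is the delicate step where the conjecture goes beyond the structural results of the rest of the paper; this is also where one would expect to see a uniform mechanism forcing the independence of $g(\Sigma_{k,m})$ on $m$.
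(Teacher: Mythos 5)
This statement is labelled a \emph{Conjecture} in the paper, and the paper contains no proof of it: the only supporting evidence offered is the table of $g(\Sigma_{k,1})$ for $k\leq 10$ (extending Dubrovin's computations for $k\leq 7$) and the observed agreement of the right-hand side of \eqref{gconj} with OEIS sequence A238641. So there is no argument of the authors to compare yours against; what you have written would, if completed, be a new result. Your general framework is the natural one and is consistent with the remarks in Section~\ref{section32}: the projection $(\s,\rho)\mapsto\s$ exhibits $\Sigma_{k,m}$ as a degree-$|\Y_k|$ cover of $\mathbb{P}^1_\s$, the sheets near $\s=0$ are labelled by the eigenvalues $F_m(\ll;\s)$ attached to the deformed Schur basis of Proposition~\ref{prop:existencedeformedschur}, and Riemann--Hurwitz reduces \eqref{gconj} to the ramification count $R=2\sum_{\ll\in\Y_k}(k-\ell(\ll))$ together with irreducibility.

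However, the proposal is a plan, not a proof, and its one concrete mechanism is wrong. You propose to account for the entire ramification $R$ by cyclic blocks in the fiber over $\s=\infty$. But the paper points out that at $\s=\infty$ the eigenvectors are the \emph{distinct} monomials $T_\ll$ with $\rho\sim\s^m\sum_i\ll_i^{2m+1}$; already for $k=2$ the two branches behave as $8\s^m$ and $2\s^m$, so the cover is unramified over $\infty$, while the required $R=2$ is supplied by the finite branch points $\s=\pm4$ visible in the square root in \eqref{example2}. Thus the ramification lives over finite values of $\s$, and computing it requires controlling where and to what order eigenvalues collide --- precisely the analysis you defer. Two further gaps: irreducibility is reduced to the unverified claim that certain transpositions generate a transitive group on $\Y_k$; and \eqref{gconj} concerns the \emph{geometric} genus, so one must distinguish genuine branch points of the normalization from nodes of the plane curve $\{\det_{\widetilde\L_k}(K_m-\rho)=0\}$ --- the paper itself notes that coincidences of eigenvalues at real $\s$ produce singularities rather than branching, and this normalization step, which is where the combinatorial quantity $\sum_\ll\ell(\ll)$ must ultimately be extracted, is absent from your argument. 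The $m$-independence is likewise only asserted. As it stands the conjecture remains open, and your outline does not close it.
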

\end{shaded}

It is straightforward to verify that the values of the right-hand side of~\eqref{gconj} 
coincide with those of sequence A238641 in The On-Line Encyclopedia of Integer Sequences\footnote{Available at \url{https://oeis.org/A238641}}, introduced by Kimberling. 
The conjectural identity~\eqref{gconj} was actually observed with the help of A238641. 

\smallskip

Recall that the KdV hierarchy is a particular Dubrovin--Zhang (DZ) hierarchy~\cite{DZ-norm}, that corresponds to the 
trivial rank-1 Cohomological Field Theory (CohFT). 
For the definition of a CohFT see~\cite{KM1994} or Section~\ref{secBR}.
For an arbitrary tautological CohFT (e.g. homogeneous CohFT, or Hodge case), following Dubrovin, one can 
consider the quantization of the first Poisson structure for the corresponding DZ hierarchy (\cite{BPS1,BPS2,DLYZ2016,DZ-norm}).
However, this Poisson structure is usually 
a deformation of $\eta^{\alpha\beta} \pa_x$, where $\eta^{\a\b}$ are certain constants. 
To make the quantization procedure as simple as for the KdV case, one could perform a Miura-type transform reducing 
the Hamiltonian operator for the DZ hierarchy to~$\eta^{\alpha\beta} \pa_x$ (cf.~\cite{DMS2005, DZ-norm,Getzler2002}) 
before doing the quantization. 
Considering the conjectural relationship between the DZ and the DR hierarchies associated to the 
same CohFT, it seems easier to directly quantize the DR hierarchy 
because the latter by definition possesses~$\eta^{\alpha\beta} \pa_x$ as the first Hamiltonian operator. 
Such a quantization is successfully constructed by Buryak and Rossi~\cite{BR2016a} with 
explicit and recursive formulae for the associated quantum DR Hamiltonians, which we will review in Section~\ref{secBR}. 
An analogous statement to Proposition~\ref{prop:existencedeformedschur} will also be given for the quantum DR hierarchy associated to a rank-1 CohFT.

\paragraph*{Organization of the paper.}
 
In Section~\ref{secBR}, we review the construction by Buryak and Rossi of the quantum DR hierarchy and prove 
a general symmetry property of the quantum DR Hamiltonians. 
In Section~\ref{sec:deformed}, we study the spectral problem of the quantum KdV hierarchy in details, in particular we prove Proposition~\ref{prop:existencedeformedschur}.
In Section~\ref{furtherremarks} we 
give further remarks including an analogous statement to Proposition~\ref{prop:existencedeformedschur} for an arbitrary rank-1 CohFT.
In Appendix~\ref{AppYJM} we give a representation of the quantum dispersionless KdV Hamiltonians 
in terms of multiplication operators in the class algebra of the symmetric group.

\paragraph*{Acknowledgements.}
We are grateful to Boris Dubrovin for his advice and helpful discussions.
We thank Konstantin Aleshkin, Marco Bertola, John Harnad, Paolo Rossi, Zuoqin Wang, Don Zagier and Youjin Zhang for valuable conversations.
This project has received funding from the 
National Key Research and Development Project ``Analysis and Geometry on Bundles" 
 SQ2020YFA070080, from NSFC No.~12061131014, and from 
the European Union's H2020 research 
and innovation programme under the Marie Sk\l owdoska-Curie grant No.~778010 {\em IPaDEGAN}. 
G.R. acknowledges support from the Fonds de la Recherche Scientifique-FNRS under EOS project O013018F. 
Part of our work was done at SISSA; we thank SISSA for excellent working conditions and financial support. 
G.R. also wishes to thank the School of Mathematical Sciences at University of Science and Technology of China in Hefei for warm 
hospitality where part of this work was carried out. 

\section{Buryak--Rossi's quantum DR Hamiltonians and their symmetry property}\label{secBR}

In this section, we review the construction of classical and quantum double ramification (DR for short) Hamiltonians~\cite{B2015, BR2016a, BR2016b} for a rank~$l$ Cohomological Field Theory (CohFT). We will restrict to the rank-1 case in the later sections.

Let $\Mgn$ be the moduli space of stable algebraic curves over $\C$ of genus~$g$ with~$n$ distinct marked points, and let 
$c_{g,n}\in\Hom(V^{\otimes n},H^{\sf even}(\Mgn,\C))$ be a CohFT~\cite{KM1994}. 
This means that we are given an $l$-dimensional complex vector space $V$, endowed with a non-degenerate symmetric two-form $\eta\in\mathrm{Sym}^2(V^*)$ and a distinguished element $\1\in V$, along with a collection of even cohomology classes $c_{g,n}(v_{1}\otimes\cdots\otimes v_{n})\in H^{\sf even}(\Mgn,\C)$, linearly depending on $v_1\otimes\cdots\otimes v_n\in V^{\otimes n}$, which satisfy the following axioms. To state them, denote by $e_1=\1,e_2,\dots,e_l$ a basis of~$V$, and let 
$\eta_{\a\b}:=\eta(e_\a,e_\b)$ and $\eta^{\a\b}$ the entries of~$\eta^{-1}$. Here and below, free Greek indices take the integer values $1,\dots,l$.
\begin{itemize}
\item $c_{g,n}$ is $\mathfrak S_n$-equivariant, with respect to the action of the symmetric group $\mathfrak S_n$ on $V^{\otimes n}$ permuting copies of $V$ and on $\Mgn$ permuting marked points.
\item Denoting $gl_1:\overline{\mathcal{M}}_{g-1,n+2}\to\Mgn$ the gluing morphism which identifies the last two marked points of a stable curve, we have
\be
gl^*_1 c_{g,n}(v_1\otimes\cdots\otimes v_n)=\eta^{\a\b}c_{g-1,n+2}(v_1\otimes\cdots\otimes v_n\otimes e_\a\otimes e_\b).
\ee
Here and in what follows, Einstein summation convention is assumed for contracted Greek indexes.
\item Denoting $gl_2:\overline{\mathcal{M}}_{g_1,n_1+1}\times\overline{\mathcal{M}}_{g_2,n_2+1}\to\Mgn$, $g=g_1+g_2$ and $n=n_1+n_2$, the gluing morphism which identifies the last marked points of a pair of stable curves, we have
\be
\qquad\qquad gl^*_2 c_{g,n}(v_1\otimes\cdots\otimes v_n)=\eta^{\a\b}c_{g_1,n_1+1}(v_1\otimes\cdots\otimes v_{n_1}\otimes e_\a)\otimes c_{g_2,n_2+1}(v_{n_1+1}\otimes\cdots\otimes v_{n}\otimes e_\b),
\ee
where in the right side we sum over the contracted indexes $\a,\b$.
\item Denoting $p:\overline{\mathcal{M}}_{g,n+1}\to\Mgn$ the map forgetting the last marked point of a stable curve, we have
\be
p^*c_{g,n}(v_1\otimes\cdots\otimes v_n)=c_{g,n+1}(v_1\otimes\cdots\otimes v_n\otimes e_1).
\ee
\item We have
\be
c_{0,3}(e_\a\otimes e_\b\otimes e_1)=\eta_{\a\b}.
\ee
\end{itemize}

We shall assume in the following that we can reduce $\eta$ to the form
\be
\label{eq:assumptioneta}
\eta_{\a\b}=\delta_{\alpha+\beta,l+1}
\ee
by a change of basis. This is always true for $l=1$, and follows by the assumption  $\eta(e_1,e_1)=0$ for $l\geq 2$, cf. \cite{Dubrovin2DTFTs}.
In what follows we shall also assume that the CohFT under consideration is \emph{tautological} (see \cite[Appendix B]{BR2016a}).

Given a rank~$l$ CohFT $c_{g,n}$, Buryak~\cite{B2015} associates to it an {\it integrable} hierarchy of Hamiltonian PDEs in~$l$ components $u^1,\dots,u^l$, called the \emph{DR hierarchy}.
Denote $u=(u^1,\dots,u^l)$, and 
 denote by $(\mathcal{A}_u,\partial_x)$  
the differential polynomial ring of~$u$, namely, an element of~$\mathcal{A}_u$ is a polynomial of 
$u_x^\a,u_{xx}^\a,u_{xxx}^\a,\dots$ whose coefficients are formal power series in~$u^\a$ with complex coefficients. The DR hierarchy has the form:
\be\label{DRcleqs}
\frac{\partial u^\alpha}{\partial t^{\beta,m}} =
\eta^{\a\rho}	\pa_x
 \Biggl(\frac{\delta \overline{h_{\beta,m}^{\sf DR,cl}}}{\delta u^\rho(x)} \Biggr), \quad m\geq 0,
\ee
where $h_{\beta,m}^{\sf DR,cl}$ are certain elements in $\mathcal{A}_u[[\e^2]]$, $m\geq 0$, defined below in \eqref{cHamiltoniandensities}. 
Here ``integrable" means that the flows commute pairwise:
\be\label{classint}
\frac{\pa}{\pa t^{\gamma,n}}\frac{\partial u^\alpha}{\partial t^{\beta,m}} =\frac{\pa}{\pa t^{\beta,m}}\frac{\partial u^\alpha}{\partial t^{\gamma,n}}.
\ee
Let us recall the construction of $h_{\beta,m}^{\sf DR,cl}$. Denote the Fourier expansion of~$u$ by
\begin{equation}
\label{fmodesDR}
u^\alpha(x) = \sum_{k\in \mathbb{Z}} U^\alpha_k {\rm e}^{\i kx}.
\end{equation}
For $m=-1$, they are defined by
\be
h_{\a,-1}^{\sf DR,cl}=\sum_{k\in\Z}\eta_{\a\b}U^\b_k{\rm e}^{\i kx},
\ee
and, for $m\geq 0$, by
\be
\label{cHamiltoniandensities}
h_{\a,m}^{\sf DR,cl}=\sum_{\begin{smallmatrix}g,n\geq 0\\ 2g-2+n\geq 0\\ k_1,\dots,k_n\in\mathbb{Z}
\end{smallmatrix}}
\frac{\left(-\epsilon^2\right)^g}{n!}\left(\int_{DR_g\big(-\sum\limits_{i=1}^n k_i,k_1,\dots,k_n\big)}\psi_1^m\ll_gc_{g,n+1}
\left(e_\a\otimes \bigotimes_{i=1}^n e_{\a_i}\right)\right)U_{k_1}^{\a_1}\cdots U_{k_n}^{\a_n}{\rm e}^{\i x\sum\limits_{i=1}^nk_i},
\ee
where $DR_g(k_0,\dots,k_n)\in H_{2(2g-2+n)}(\overline{\mathcal{M}}_{g,n+1},\C)$ is the DR cycle \cite{BSSZ2015,FP2005,JPPZ2017},
$\psi_1\in H^2(\Mgn,\C)$ is the $\psi$-class, and $\ll_g\in H^{2g}(\Mgn,\C)$ is the $g$-th Chern class of the Hodge bundle. 
It is proved in~\cite{B2015} that one can view the above-defined $h_{\a,m}^{\sf DR,cl}$ as elements in $\mathcal{A}_u[[\e^2]]$;
for example, $h_{\a,-1}^{\sf DR,cl}$ is identified with the differential polynomial $\eta_{\a\b}u^\b$.
The Hamiltonian operator $\eta^{\alpha\beta}\pa_x$ defines a Poisson structure on the space of local 
functionals~$\mathcal{A}_u[[\e^2]] / \partial_x \mathcal{A}_u[[\e^2]]$ in the following way: 
for any $ \overline{f}, \overline{g}\in \mathcal{A}_u[[\e^2]] / \partial_x \mathcal{A}_u[[\e^2]]$, 
\begin{align}\label{PoissonDR}
\left\{ \overline{f},\overline{g}\right\} :=  \overline{\frac{\delta \overline{f}}{ \delta u^\alpha(x)} 
\eta^{\alpha\beta}\pa_x \biggl(\frac{\delta \overline{g}}{ \delta u^\beta(x)}\biggr)} \, \in \mathcal{A}_u[[\e^2]] / \partial_x \mathcal{A}_u[[\e^2]].
\end{align}
The commutativity~\eqref{classint} of the DR flows~\eqref{DRcleqs} then can also be interpreted as
\begin{equation}\label{PoissonDRcomm}
\left\{ \overline{h_{\alpha,m}^{\sf DR,cl}} \,,\, \overline{h_{\beta,n}^{\sf DR,cl}} \right\} = 0 , \quad \forall\,m,n\geq 0.
\end{equation}
In terms of the Fourier modes \eqref{fmodesDR} the Poisson structure \eqref{PoissonDR} reads
\be
\bigl\{U_k^\alpha,U_l^\beta\bigr\} = \i k \eta^{\alpha\beta} \delta_{k+l,0},\quad \forall\,k,l\in\mathbb{Z}.
\ee
Similarly to the rank-1 case considered in the Introduction, let us recall that the canonical quantization with respect to $\{\cdot,\cdot\}$ for the space of local functionals $\mathcal{A}_u[\e]/\partial_x \mathcal{A}_u[\e]$ 
is defined as follows:
\begin{align}
 \wh U_k^\a &= \begin{cases} q^{\a,k}, & k>0, \\ 
  U^{\a}_0\, {\rm id}, & k=0, \\
  -\hbar k \eta^{\alpha\beta} \frac{\partial }{\partial q^{\beta,-k}}, & k < 0, \\ 
  \end{cases} \\
  \label{quanta}
 \wh {\overline{a}} &= \,  :  \int_{S^1} a(u(x);u_x(x),u_{xx}(x),\dots;\e)\, \d x : \, , \quad \forall\,a\in \mathcal{A}_u[\e], 
\end{align}
where $:\,:$ denotes the normally ordered product. As explained in the Introduction, this means that we first move all $U_k^\a$ 
with $k>0$ to the left of the $U_k^\a$ with $k<0$ in each monomial in $a$, 
and then we replace each $U_k^\a$ by the corresponding operator $\wh U_k^\a$.
The $\wh U_k^\a$ with $k\in \mathbb{Z}$ and the $\wh{\overline{a}}$ with $a\in \mathcal{A}_u[\e^2]$ 
are considered as operators on $\L$ depending on parameters $\hbar,\e$, i.e. formally as elements of $({\rm End}\,\L)\otimes\C[\e,\hbar]$.
In such a canonical quantization, the Poisson bracket is replaced by $\frac 1{\i\hbar}[\cdot,\cdot]$; namely, we have
\be
\Bigl[\wh U_k^\a, \wh U_j^\b\Bigr] = - \hbar k \eta^{\a\b} \delta_{k+j,0},\qquad k,j\in \mathbb{Z}.
\ee
It turns out that the Poisson commutativity~\eqref{PoissonDRcomm} does not hold after the quantization, namely, 
\be
\left[\widehat{\overline{h_{\alpha,m}^{\sf DR,cl}}}\,, \,\widehat{\overline{h^{\sf DR,cl}_{\beta,n}}}\right]
\ee 
in general do not vanish. 
One needs to deform~$h_{\a,m}^{\sf DR,cl}$ such that the quantizations of the deformed Hamiltonians commute.
A successful and explicit deformation is obtained by Buryak and Rossi using the DR cycles in the moduli spaces of curves \cite{BR2016a}. Indeed,  
introduce a family of elements $h_{\a,m}\in \mathcal{A}_u[[\e^2,\hbar]]$, $m\geq -1$ by
\be
h_{\a,-1}^{\sf DR}=\sum_{k\in\Z}\eta_{\a\b}U^\b_k{\rm e}^{\i kx}
\ee
and, for $m\geq 0$,
\be
\label{qHamiltoniandensities}
h_{\a,m}^{\sf DR}\!=\!\!\!\!\sum_{\begin{smallmatrix}g,n\geq 0\\ 2g-2+n\geq 0\\ k_1,\dots,k_n\in\mathbb{Z}
\end{smallmatrix}}\!\!
\frac{\hbar^g}{n!}\left(\int_{DR_g\big(-\sum\limits_{i=1}^n k_i,k_1,\dots,k_n\big)}\psi_1^m\L\left(-\frac{\epsilon^2}{\hbar}\right)c_{g,n+1}\left(e_\a\otimes \bigotimes_{i=1}^ne_{\a_i}\right)\right) U_{k_1}^{\a_1}\cdots U_{k_n}^{\a_n} {\rm e}^{\i x\sum\limits_{i=1}^nk_i},
\ee
where $\L(\xi)=1+\ll_1\xi+\cdots +\ll_g\xi^g$ is the Chern polynomial of the Hodge bundle, $\ll_i\in H^{2i}(\Mgn,\C)$.
Again, we can and should view $h_{\a,m}$ as elements in~$\mathcal{A}_u[\e,\hbar]$ via~\eqref{fmodesDR};
for example, $h_{\a,-1}=\eta_{\a\b} u^\b$. For more details about this correspondence with differential polynomials see \cite{B2015,BR2016a}.
Note also that $\left.h_{\a,m}^{\rm DR}\right|_{\hbar=0}=h_{\a,m}^{\sf DR,cl}$ for all $m\geq -1$.

It is shown by Buryak and Rossi \cite{BR2016a} that, denoting for convenience $H_{\a,m}=\widehat{\overline{h^{\sf DR}_{\alpha,m}}}$,
\be
\left[H_{\a,m},H_{\beta,n}\right] =0, \quad m,n\geq -1.
\ee
(We note that obtaining $H_{\a,m}$ is more direct from~\eqref{qHamiltoniandensities}.)
In particular $H_{1,0}$ has the expression \cite[Lemma 3.6]{BR2016a}
\be
\label{degreegeneral}
H_{1,0}=\frac 12\eta_{\a\b}U_0^\a U_0^\b+\hbar\sum_{\a=1}^d\sum_{k>0}kq^{\a,k}\frac{\pa}{\pa q^{\a,k}}+{\rm const},
\ee
which is, up to constants, the grading operator with respect to the degree $\deg q^{\a,k}:=k$. 
From this we know that all $H_{\a,m}$  preserve such degree.

Although one can obtain $H_{\alpha,m}$ from~\eqref{qHamiltoniandensities} directly, in practice it is actually
more effective to compute them via the following recursion, found and proved by Buryak and Rossi \cite{BR2016a}:
\begin{align}
\label{BRrec1}
\frac{\pa}{\pa x}\left(D-1\right)h_{\a,m+1}^{\sf DR}&=\frac 1\hbar [h_{\a,m}^{\sf DR},H_{1,1}],\qquad D:=\epsilon\frac{\pa}{\pa\epsilon}+2\hbar\frac{\pa}{\pa\hbar}+\sum_{\a=1}^d\sum_{k\geq 0}u_{kx}^\a\frac{\pa}{\pa u_{kx}^\a},
\\
\label{BRrec2}
\frac{\pa h_{\a,m+1}^{\sf DR}}{\pa u^1}&=h_{\a,m}^{\sf DR}.
\end{align}

We now discuss a symmetry property of the quantum DR Hamiltonians.
The space $\L$ has a basis given by the monomials
\be
\label{basis}
q^{\a_1,k_1}\cdots q^{\a_n,k_n},
\ee
with $k_i>0$ and $\a_i=1,\dots,l$. Introduce a sesquilinear form on~$\L$, with values in $\C[\hbar]$ defined on the basis elements \eqref{basis} by
\begin{align}
\nonumber
\left\langle q^{\a_1,k_1}\cdots q^{\a_n,k_n},q^{\b_1,j_1}\cdots q^{\b_n,j_n}
\right\rangle
&=\hbar^n k_1\cdots k_n\sum_{\s\in\mathfrak S_n}\prod_{i=1}^n\delta_{k_i,j_{\s(i)}}\eta^{\a_i\b_{\s(i)}}
\\
\label{defscalarproduct}
&=\hbar^n k_1\cdots k_n\sum_{\s\in\mathfrak S_n}\prod_{i=1}^n\delta_{k_i,j_{\s(i)}}\delta_{\alpha_i+\beta_{\sigma(i)},l+1},
\end{align}
cf.~\eqref{eq:assumptioneta}, setting $\left\langle q^{\a_1,k_1}\cdots q^{\a_n,k_n},q^{\b_1,j_1}\cdots q^{\b_{m},j_{m}}\right\rangle=0$ whenever $n\not=m$.
We shall denote by the same symbol $\langle,\rangle$ the natural sesquilinear extension to $\L\otimes_\C\C[[\hbar,\e]]$.

\begin{remark}
For rank-1 CohFTs, the sesquilinear form~\eqref{defscalarproduct} reduces, up to a rescaling in~$\hbar$, to the standard inner product \eqref{eq:schurorthogonal} on the space of symmetric polynomials~\cite{M2015} (cf.~also Appendix~\ref{AppYJM}).
\end{remark}

\begin{lemma}
\label{lemmaadjoint}
For any $k\in\mathbb{Z}$, the operators $\wh U_k^\a$ and $\wh U_{-k}^\a$
satisfy the following relation:
\be
\label{symmetrylemma}
\left\langle \wh U_k^\a f,g\right\rangle=\left\langle f,\wh U_{-k}^\a g \right\rangle,\qquad f,g\in\L.
\ee
\end{lemma}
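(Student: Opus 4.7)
The plan is to reduce the identity \eqref{symmetrylemma} to a combinatorial check on the monomial basis. Since both $\wh U_k^\a$ and $\wh U_{-k}^\a$ are $\C$-linear on $\L$ and $\langle\,,\,\rangle$ is sesquilinear, it suffices to verify \eqref{symmetrylemma} for basis monomials $f=q^{\a_1,k_1}\cdots q^{\a_n,k_n}$ and $g=q^{\b_1,j_1}\cdots q^{\b_m,j_m}$. Observe furthermore that the right-hand side of \eqref{defscalarproduct} is manifestly invariant under the interchange $(\a_i,k_i)\leftrightarrow(\b_i,j_i)$ (replace $\sigma$ by $\sigma^{-1}$) and takes values in $\R[\hbar]$, so the form is symmetric on basis monomials. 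This reduces the case $k<0$ to the case $k>0$ by swapping the two arguments, while the case $k=0$ is immediate since both $\wh U_0^\a$ act by multiplication by the parameter $U_0^\a$.

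In the remaining case $k>0$, the operator $\wh U_k^\a$ prepends the factor $q^{\a,k}$ to $f$ producing a monomial with $n+1$ factors, whereas, using $\eta^{\a\b}=\delta_{\a+\b,l+1}$,
\be
\wh U_{-k}^\a g \;=\; \hbar k\,\eta^{\a\b}\frac{\pa g}{\pa q^{\b,k}}\;=\;\hbar k\sum_{a=1}^m \delta_{j_a,k}\,\delta_{\a+\b_a,l+1}\prod_{i\ne a}q^{\b_i,j_i}
\ee
has $m-1$ factors. Hence both sides of \eqref{symmetrylemma} vanish unless $m=n+1$, and we restrict attention to this case.

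I then expand $\langle \wh U_k^\a f,g\rangle$ via \eqref{defscalarproduct}: labelling the factors of $\wh U_k^\a f$ by positions $0,1,\ldots,n$ (with position $0$ carrying the indices $(\a,k)$), split the sum over $\sigma\in\mathfrak S_{n+1}$ according to $a:=\sigma(0)\in\{1,\ldots,n+1\}$. The contribution from position $0$ is then $\delta_{k,j_a}\,\delta_{\a+\b_a,l+1}$, and the residual sum runs over bijections $\{1,\ldots,n\}\to\{1,\ldots,n+1\}\setminus\{a\}$; the overall prefactor $\hbar^{n+1}k\,k_1\cdots k_n$ factorises as $(\hbar k)\cdot(\hbar^n k_1\cdots k_n)$. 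Comparing with $\langle f,\wh U_{-k}^\a g\rangle$ expanded via the above formula, the $(\hbar k)$ from $\wh U_{-k}^\a g$ matches the prepended factor in the prefactor, and the residual inner products on $n$-factor monomials match term by term with the residual permutation sum. This yields \eqref{symmetrylemma}.

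The verification is essentially bookkeeping with no genuine obstacle; the only point requiring care is the sign in the definition of $\wh U_{-k}^\a$, namely $\wh U_{-k}^\a=-\hbar(-k)\eta^{\a\b}\pa/\pa q^{\b,k}=+\hbar k\,\eta^{\a\b}\pa/\pa q^{\b,k}$ for $k>0$, which is precisely what produces the $+\hbar k$ factor needed to match the left-hand side.
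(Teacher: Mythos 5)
Your proof is correct and follows essentially the same route as the paper's: reduce to basis monomials, dispose of $k=0$ and reduce $k<0$ to $k>0$ by the symmetry of the form, then match the expansion of $\left\langle \wh U_k^\a f,g\right\rangle$ over $\mathfrak S_{n+1}$ with the Leibniz expansion of $\wh U_{-k}^\a g$ via the bijection $\sigma\leftrightarrow(a,\pi)$. The sign check on $\wh U_{-k}^\a$ and the factorisation of the prefactor are exactly the points the paper's computation also hinges on.
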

\begin{proof}
Note that the statement is trivial for $k=0$ and it is symmetric with respect to $k\mapsto -k$. 
So we only need to consider the case that $k>0$. Take $k=k_1>0$. 
To show the statement it suffices to verify~\eqref{symmetrylemma} when $f,g$ are the monomials~\eqref{basis}. 
The left-hand side of~\eqref{symmetrylemma} can be nonzero only for
\begin{align}
\nonumber
\left\langle \wh U_{k_1}^{\a_1} q_{k_2}^{\a_2}\cdots q_{k_{n+1}}^{\a_{n+1}},q_{j_1}^{\b_1}\cdots q_{j_{n+1}}^{\b_{n+1}}\right\rangle&=
\left\langle q_{k_1}^{\a_1} \cdots q_{k_{n+1}}^{\a_{n+1}},q_{j_1}^{\b_1}\cdots q_{j_{n+1}}^{\b_{n+1}}\right\rangle
\\
\label{firstside}
&=\hbar^{n+1}k_1\cdots k_{n+1}\sum_{\s\in\mathfrak S_{n+1}}\prod_{i=1}^{n+1}\delta_{k_i,j_{\s(i)}}\eta^{\a_i\b_{\s(i)}}.
\end{align}
On the other hand, the right-hand side of~\eqref{symmetrylemma} is nonzero only for
\begin{align}
\nonumber
&\left\langle q^{\a_2,k_2}\cdots q^{\a_{n+1},k_{n+1}},\wh U_{-k_1}^{\a_1}q^{\b_1,j_1}\cdots q^{\b_{n+1},j_{n+1}}\right\rangle
\\ \nonumber
&\qquad
=\hbar k_1\eta^{\a_1\beta}\left\langle q^{\a_2,k_2}\cdots q^{\a_{n+1},k_{n+1}},\frac{\pa}{\pa q^{\b,k_1}}\left(q^{\b_1,j_1}\cdots q^{\b_{n+1},j_{n+1}}\right)\right\rangle
\\
&\qquad
=\sum_{i=1}^{n+1}\hbar k_1\eta^{\a_1\beta_i}\delta_{k_1,j_i}\left\langle q^{\a_2,k_2}\cdots q^{\a_{n+1},k_{n+1}},q^{\b_1,j_1}\cdots \mathop{q^{\b_i,j_i}}^{\frown}\cdots q^{\b_{n+1},j_{n+1}}\right\rangle
\end{align}
where $\frown$ denotes omission of the corresponding term. By \eqref{defscalarproduct} we rewrite the last expression as
\be
\hbar^{n+1}k_1\cdots k_{n+1}\sum_{i=1}^{n+1}\eta^{\a_1\beta_i}\delta_{k_1,j_i}\sum_{\pi}\delta_{k_2,j_{\pi(2)}}\eta^{\a_2\b_{\pi(2)}}\cdots\delta_{k_{n+1},j_{\pi(n+1)}}\eta^{\a_{n+1}\b_{\pi(n+1)}}
\ee
where the internal sum ranges over all bijections $\pi:\{2,\dots,n+1\}\to\{1,\dots,n+1\}\setminus\{i\}$. Changing summation indexes $(i,\pi)\to\rho\in\mathfrak S_{n+1}$ where $\rho$ is defined by $\rho(1)=i,\rho(2)=\pi(2),\dots,\rho(n+1)=\pi(n+1)$ we recognize that the last expression is the same as \eqref{firstside}. The proof is complete.
\end{proof}

Denote by~$\mathcal A_u^\R$ the subalgebra of $\mathcal A_u$ consisting of differential polynomials with real coefficients.

\begin{lemma} 
\label{lemmasymmetry}
For any real differential polynomial $a\in\mathcal A_u^{\R}$, the operator~$A=\wh{\overline a}$ (cf. \eqref{quanta}) is (formally) self-adjoint with respect to~\eqref{defscalarproduct}:
\be
\left\langle Af,g\right\rangle
=
\left\langle f,Ag\right\rangle
,\qquad f,g\in\L.
\ee
\end{lemma}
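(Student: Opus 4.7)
The plan is to reduce self-adjointness to two ingredients already available: Lemma~\ref{lemmaadjoint}, which makes $\wh U^\a_k$ and $\wh U^\a_{-k}$ mutually adjoint, and the reality of the coefficients of $a$. By $\C$-linearity in $a$ I first reduce to the case $a=f\,u^{\a_1}_{k_1x}\cdots u^{\a_n}_{k_nx}$ with $f\in\R$. Substituting the Fourier expansion $u^\a_{kx}=\sum_{j\in\Z}(\i j)^k U^\a_j e^{\i jx}$, picking out the zero Fourier mode via $\int_{S^1}\d x$, and applying the normal-ordered quantization prescription~\eqref{quanta} yields
\be
A=f\!\!\!\sum_{\vec\jmath\in\Z^n:\,\sum_r j_r=0}\!\!\!\prod_{r=1}^n(\i j_r)^{k_r}\ :\!\wh U^{\a_1}_{j_1}\cdots \wh U^{\a_n}_{j_n}\!:.
\ee

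Next I compute the adjoint of a single normally-ordered monomial $M_{\vec\jmath,\vec\a}:=\,:\!\wh U^{\a_1}_{j_1}\cdots\wh U^{\a_n}_{j_n}\!:$. By definition this is the product with all factors having $j_r>0$ placed to the left of those with $j_r<0$; within each of these two blocks the factors commute pairwise because $[\wh U^\a_k,\wh U^\b_l]\propto\delta_{k+l,0}$ vanishes whenever $k,l$ share a sign. Applying Lemma~\ref{lemmaadjoint} together with $(BC)^*=C^*B^*$ reverses the product and simultaneously sends each subscript $j_r\mapsto -j_r$; the reversal inside each block changes nothing (same commutativity), and the two blocks swap roles under the sign flip, so the outcome is itself normally ordered:
\be
M_{\vec\jmath,\vec\a}^{\,*}=\,:\!\wh U^{\a_1}_{-j_1}\cdots\wh U^{\a_n}_{-j_n}\!:\,=M_{-\vec\jmath,\vec\a}.
\ee

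Finally, by sesquilinearity $A^*=\overline f\sum_{\vec\jmath}\prod_r\overline{(\i j_r)^{k_r}}\,M_{-\vec\jmath,\vec\a}$; reindexing $\vec\jmath\mapsto -\vec\jmath$ and using $f\in\R$ together with $\overline{(\i(-j_r))^{k_r}}=(\i j_r)^{k_r}$ recovers the original expansion of $A$, so $A^*=A$. The general case $a\in\mathcal{A}_u^\R$ then follows by $\C$-linearity. I expect the only genuine (and rather mild) obstacle to be the step showing that the adjoint of a normally-ordered product is again normally ordered with no commutator corrections; as noted this relies on the sign-homogeneity within each block, so that reversing the order of operators inside a block is a trivial rewriting.
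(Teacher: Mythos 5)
Your proof is correct and follows essentially the same route as the paper's: reduce to a monomial, write $A$ as a sum of normally ordered products of Fourier modes, apply Lemma~\ref{lemmaadjoint} to compute $A^*$, and relabel $k_i\mapsto-k_i$. The one point you elaborate that the paper leaves implicit --- that reversing the factors of a normally ordered monomial costs nothing because $[\wh U^\a_k,\wh U^\b_l]\propto\delta_{k+l,0}$ vanishes within each sign-homogeneous block, so the adjoint is again normally ordered with no commutator corrections --- is a worthwhile clarification but not a departure.
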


\begin{proof}
It is enough to show the lemma when $a$ is a monomial of the form $a=\pa_x^{j_1}u^{\a_1}\cdots\pa_x^{j_n}u^{\a_n}$, for some $n>0$, $j_1,\dots,j_n\geq 0$, $\a_1,\dots,\a_n\in\{1,\dots,d\}$. In such case $A=\wh{\overline a}$ has the form
\be
\label{form}
A
=\sum_{\begin{smallmatrix}k_1,\dots,k_n\in\Z \\ k_1+\cdots+k_n=0\end{smallmatrix}} \left(\i k_1\right)^{j_1}\cdots \left(\i k_n\right)^{j_n}\mathop{:}U_{k_1}^{\a_1}\cdots U_{k_n}^{\a_n}\mathop{:}
\ee
and so we can explicitly compute the adjoint operator $A^*$, satisfying $\left\langle Af,g\right\rangle=\left\langle f,A^*g\right\rangle$ using Lemma~\ref{lemmaadjoint}. This yields
\be
A^*
=\sum_{\begin{smallmatrix}k_1,\dots,k_n\in\Z \\ k_1+\cdots+k_n=0\end{smallmatrix}} \left(-\i k_1\right)^{j_1}\cdots \left(-\i k_n\right)^{j_n}\mathop{:}U_{-k_n}^{\a_n}\cdots U_{-k_1}^{\a_1}\mathop{:}
\ee
and relabeling the indices of the sum by $k_i\mapsto -k_{i}$ we conclude that $A=A^*$.
\end{proof}

\begin{corollary}
\label{corollarysymmetry}
Assuming that the quantum DR Hamiltonians $H_{\a,m}=\wh{\overline{h_{\a,m}^{\sf DR}}}$ (cf.~\eqref{qHamiltoniandensities} for the definition of~$h_{\a,m}^{\sf DR}$) have real coefficients and assuming $U_0\in\R$, then $H_{\a,m}$ are (formally) self-adjoint with respect to~\eqref{defscalarproduct}.
\end{corollary}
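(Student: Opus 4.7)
The plan is to reduce the statement directly to Lemma \ref{lemmasymmetry}. Since the quantization map $a\mapsto\wh{\overline{a}}$ is $\C[\hbar,\e]$-linear in $a$, I first write
\begin{equation*}
h_{\alpha,m}^{\sf DR}=\sum_{g,k\geq 0}\hbar^g\e^{2k}a_{\alpha,m}^{(g,k)},
\end{equation*}
where the hypothesis of real coefficients on $h_{\alpha,m}^{\sf DR}$ guarantees $a_{\alpha,m}^{(g,k)}\in\mathcal A_u^\R$. Consequently $H_{\alpha,m}=\sum_{g,k\geq 0}\hbar^g\e^{2k}\wh{\overline{a_{\alpha,m}^{(g,k)}}}$, and Lemma \ref{lemmasymmetry} applied termwise shows that each summand $\wh{\overline{a_{\alpha,m}^{(g,k)}}}$ is formally self-adjoint with respect to the pairing \eqref{defscalarproduct}.

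Next I would check that the natural sesquilinear extension of $\langle\cdot,\cdot\rangle$ to $\L\otimes_\C\C[[\hbar,\e]]$ leaves the real formal parameters $\hbar,\e$ unconjugated, so that scalar factors $\hbar^g\e^{2k}$ pull out identically from either slot. The hypothesis $U_0\in\R$ enters at a lower level: in formula \eqref{form} the zero Fourier modes $\wh{U_0^\a}=U_0^\a\cdot\mathrm{id}$ survive as scalars whenever some $j_i=0$ with $k_i=0$, since such factors are not killed by the powers $(\i k_i)^{j_i}$, and their reality is precisely what the involution argument behind Lemma \ref{lemmaadjoint} and Lemma \ref{lemmasymmetry} needs in order to identify $A^*$ with $A$ rather than a complex-scalar rescaling of it.

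Combining these ingredients, for any $f,\phi\in\L$ one has
\begin{equation*}
\langle H_{\alpha,m}f,\phi\rangle=\sum_{g,k\geq 0}\hbar^g\e^{2k}\langle\wh{\overline{a_{\alpha,m}^{(g,k)}}}f,\phi\rangle=\sum_{g,k\geq 0}\hbar^g\e^{2k}\langle f,\wh{\overline{a_{\alpha,m}^{(g,k)}}}\phi\rangle=\langle f,H_{\alpha,m}\phi\rangle,
\end{equation*}
which is the claim. The only obstacle is bookkeeping: verifying that ``real coefficients together with $U_0\in\R$'' is exactly the hypothesis required by Lemma \ref{lemmasymmetry} on each graded piece, and that nothing in the recursion \eqref{BRrec1}--\eqref{BRrec2} (or the geometric formula \eqref{qHamiltoniandensities}) introduces imaginary constants. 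No new ingredient beyond the preceding lemma is required.
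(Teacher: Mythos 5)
Your proposal is correct and follows essentially the same route as the paper, which simply observes that the corollary follows directly from Lemma \ref{lemmasymmetry}; your decomposition of $h_{\a,m}^{\sf DR}$ into real differential-polynomial pieces weighted by $\hbar^g\e^{2k}$ and the termwise application of that lemma is just an explicit spelling-out of that reduction. The additional remarks on where $U_0\in\R$ enters are consistent with the paper's treatment of the $k=0$ mode in Lemma \ref{lemmaadjoint}.
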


\begin{proof}
The statement follows directly from Lemma~\ref{lemmasymmetry}.
\end{proof}

It follows from Corollary~\ref{corollarysymmetry} that, under the same reality assumptions, the Buryak--Rossi's quantum Hamiltonians are {\it diagonalizable}.

Let us now consider the example of the trivial CohFT $\bigl(V=\C$, $\eta=1$, $c_{g,n}=1\bigr)$. 
In this case, Buryak--Rossi's quantum Hamiltonian densities read as follows:
\begin{align}
\label{qkdvHamiltonians1}
h_{-1}^{\sf DR}&=u(x),
\\
\label{qkdvHamiltonians2}
h_{m}^{\sf DR}&=
\sum_{\begin{smallmatrix}g\geq 0,\ n\geq 0 \\ 2g-2+n\geq 0\end{smallmatrix}}\sum_{\begin{smallmatrix}k_1,\dots,k_n\in\Z \end{smallmatrix}}\sum_{\ell=0}^g\frac{(-\epsilon^2)^\ell\hbar^{g-\ell}}{n!}\left(\int_{DR_g\big(-\sum\limits_{i=1}^n k_i,k_1,\dots,k_n\big)}\psi_1^m\lambda_\ell\right)U_{k_1}\cdots U_{k_n},\quad m\geq 0.
\end{align}
Here we omit the index~$\a$, as we now have~$l=1$.
The corresponding quantum Hamiltonians~$H_m=\wh{\overline{h_m^{\sf DR}}}$ form Buryak--Rossi's quantum KdV hierarchy. 
Indeed, by comparing~\eqref{qkdvHamiltonians1}--\eqref{qkdvHamiltonians2} with~\eqref{cHamiltoniandensities} (taking $c_{g,n}\equiv1$) one sees that the classical $\hbar\to0$ limit of $\overline{h_m^{\sf DR}}$ gives the classical DR Hamiltonians associated to the trivial CohFT, which coincides with the classical KdV Hamiltonians~\cite{B2015}.
Recalling the expansion \eqref{genusexpansionHm} and using the dimensional constraint it can be proved that $H_m^{[i]}=0$ for $i>m+2$;
we observe that also $H_m^{[m+1]}=H_m^{[m+2]}=0$.

\begin{remark}\label{remarkhomogeneous}
Noting that the intersection number on the DR cycle in \eqref{qkdvHamiltonians2} vanishes unless $m+\ell=2g-2+n$, we find that $h_{m}$ is homogeneous of degree $m+2$ with respect to the grading
\be
\label{quantumKodama}
\wt\deg\,u=1,\qquad\wt{\deg}\,\e=\frac 12,\qquad\wt{\deg}\,\hbar=2.
\ee
\end{remark}

Denote by 
\be
\mathcal{H}^{[0]}(z):=1+\sum_{m\geq -1}z^{m+2}H_m^{[0]}
\ee
the generating series of the quantum \emph{dispersionless} KdV Hamiltonians, where $H_m^{[0]}$ are defined in~\eqref{genusexpansionHm}. 
The following formula is given by Buryak and Rossi~\cite{BR2016a}:
\be
\label{Eliashbergformula}
\mathcal{H}^{[0]}(z)=\frac 1{S(\sqrt\hbar z)} :\int {\rm e}^{z S\left(\i \sqrt\hbar z \frac{\pa}{\pa x}\right)u}\d x :\,,
\ee
where the series $S(z)$ is defined as
\be
\label{S}
S(z)=\frac{\sinh(z/2)}{z/2}=1+\frac{z^2}{24}+\frac{z^4}{1920}+\frac{z^6}{322560}+\cdots.
\ee
By comparing with Eliashberg's formula \cite{E2007}, 
one observes immediately that Buryak--Rossi's quantum \emph{dispersionless} KdV Hamiltonians coincide with 
Eliashberg's ones.

\section{The spectral problem of the quantum KdV hierarchy}\label{sec:deformed}

In this section we study the spectral problem of the quantum KdV hierarchy and give applications.

\subsection{Non-degeneracy property}

The following lemma plays a crucial role in the proof of Prop. \ref{prop:existencedeformedschur}.

\begin{lemma}\label{nondegeneracylemma}
For any $w\geq 0$ there exists $m=m(w)\geq 0$ that $E_m^{[0]}(\ll)\not =E_m^{[0]}(\mu)$ for all $\ll\not=\mu$, $|\ll|=w=|\mu|$. Here $E_m^{[0]}(\ll)$ are the dispersionless eigenvalues, given in \eqref{dispersionlesseigenvalues}. 
\end{lemma}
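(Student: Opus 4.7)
The plan is to reduce the inequality of eigenvalues to a separation statement for the shifted symmetric functions $Q_j$, and then invoke finiteness. From Theorem~\ref{thmdubrovin},
\[
E_m^{[0]}(\ll;\hbar,U_0)=\sum_{j=0}^{m+2}\frac{\sqrt{\hbar^{j}}\,U_0^{m+2-j}}{(m+2-j)!}\,Q_j(\ll)
\]
is a polynomial in the independent formal variables $\sqrt\hbar$ and $U_0$, with pairwise distinct monomials $\sqrt{\hbar^{j}}U_0^{m+2-j}$. Consequently $E_m^{[0]}(\ll)=E_m^{[0]}(\mu)$ is equivalent to $Q_j(\ll)=Q_j(\mu)$ for every $0\le j\le m+2$. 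Thus it suffices to exhibit, for each $w\ge 0$, an integer $N=N(w)$ such that the finite tuple $(Q_0(\ll),\dots,Q_N(\ll))$ separates all elements of the finite set $\Y_w$; then $m(w):=\max\{0,N-2\}$ answers the lemma.

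The key intermediate step is the global separation statement: the full family $(Q_j)_{j\ge0}$, equivalently $(P_j)_{j\ge 1}$ in view of \eqref{Q} (noting that $P_0\equiv0$ and the additive constants $\beta_j$ cancel in differences), separates partitions. To prove it, fix $\ll,\mu\in\Y$ and pad them with zero parts to a common length $L\ge\max(\ell(\ll),\ell(\mu))$; set $x_i:=\ll_i-i+\tfrac12$ and $y_i:=\mu_i-i+\tfrac12$ for $1\le i\le L$. Since the summands in \eqref{P} vanish for $i>L$ and the subtracted terms $(-i+\tfrac12)^j$ coincide for the two partitions, the equalities $P_j(\ll)=P_j(\mu)$ for $j=1,\dots,L$ translate into equalities of finite power sums $\sum_{i=1}^{L} x_i^j=\sum_{i=1}^{L} y_i^j$, $j=1,\dots,L$. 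Newton's identities (over $\C$) then force the multisets $\{x_i\}_{i=1}^L$ and $\{y_i\}_{i=1}^L$ to coincide; since both sequences are strictly decreasing in $i$, this yields $x_i=y_i$ for each $i$, hence $\ll=\mu$.

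With global separation in hand, the existence of $N(w)$ is a short finiteness argument: for each of the finitely many ordered pairs $(\ll,\mu)\in\Y_w\times\Y_w$ with $\ll\ne\mu$, pick an index $j(\ll,\mu)$ with $Q_{j(\ll,\mu)}(\ll)\ne Q_{j(\ll,\mu)}(\mu)$, and take $N(w)$ to be the maximum of these finitely many indices. The only substantive step is the global separation claim of the previous paragraph; the reduction to Newton's identities on a finite multiset is classical, so I do not expect a genuine obstacle. If a sharper bound were desired, the same argument shows that $m(w)=\max\{0,w-1\}$ already works, since every partition of weight $w$ has length at most $w$.
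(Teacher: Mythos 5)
Your proof is correct and follows the same overall skeleton as the paper's: reduce the statement to the injectivity of $\ll\mapsto(P_j(\ll))_{j\ge1}$ (equivalently of $(Q_j)_{j\ge0}$ on partitions of fixed weight), then conclude by finiteness of $\Y_w$ together with the observation that the monomials $\sqrt{\hbar^{j}}U_0^{m+2-j}$ in \eqref{dispersionlesseigenvalues} are pairwise distinct, so a discrepancy in some $Q_j$ with $j\le m+2$ forces a discrepancy in the eigenvalues. Where you genuinely diverge is in the proof of the injectivity step. The paper argues via generating series: it writes $\sum_{j\ge1}P_j(\ll)z^j/j!=\sum_{i\ge1}{\rm e}^{z(\ll_i-i+\frac12)}-\frac{1}{2\sinh(z/2)}$ for $\Re z>0$ and deduces $\ll=\mu$ from the equality of the resulting exponential sums. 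You instead pad to a common finite length $L$, note that differences of the $P_j$ become differences of finite power sums of the strictly decreasing sequences $\ll_i-i+\tfrac12$, and invoke Newton's identities to identify the multisets. Your route is more elementary (no convergence considerations) and buys the explicit, quantitative bound $m(w)=\max\{0,w-1\}$, which the paper does not record and which sharpens the statement; the paper's exponential-sum identity, on the other hand, is of a piece with computations it reuses elsewhere (e.g.\ in Appendix~\ref{AppYJM}). Both arguments are complete.
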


\begin{proof}
We first claim that $P_j(\ll)=P_j(\mu)$ for all $j\geq 0$ if and only if $\ll=\mu$, where $P_j$ are defined in \eqref{P}. 
To prove this, we consider the generating series
\be
\sum_{j\geq 1}P_j(\ll)\frac{z^j}{j!}=\sum_{i\geq 1}\left[{\rm e}^{z\left(\ll_i-i+\frac 12\right)}-{\rm e}^{z\left(-i+\frac 12\right)}\right]=\sum_{i\geq 1}{\rm e}^{z\left(\ll_i-i+\frac 12\right)}-\frac 1{2\sinh(z/2)},
\ee
provided that the complex variable $z$ satisfies $\Re z>0$. It follows that if $P_j(\ll)=P_j(\mu)$ for all $j\geq 0$ we have
\be
\sum_{i\geq 1}{\rm e}^{z\left(\ll_i-i\right)}=\sum_{i\geq 1}{\rm e}^{z\left(\mu_i-i\right)}
\ee
which implies $\ll_i=\mu_i$ for all $i\geq 1$, and so the partitions $\ll$ and $\mu$ must coincide, as claimed.
It follows that for any pair of partitions $\ll\not=\mu$, with $|\ll|=|\mu|$, there exists $j=j(\ll,\mu)$ such that
$Q_j(\ll)\not=Q_j(\mu)$, where $Q_j$ are defined in \eqref{P}. By looking at \eqref{dispersionlesseigenvalues}, namely
\be
E_m(\ll;\hbar,U_0)=\sum_{j=0}^{m+2}\frac{\sqrt{\hbar^{j}}U_0^{m+2-j}}{(m+2-j)!}Q_j(\ll),
\ee
we conclude that as soon as $m>j(\ll,\mu)$ for all $\ll,\mu$ of the same weight $w=|\ll|=|\mu|$, some coefficient of $U_0$ in $E^{[0]}_{m}(\ll;\hbar,U_0)$ is different from the corresponding coefficient in $E^{[0]}_{m}(\mu;\hbar,U_0)$.
\end{proof}

\subsection{Proof of Proposition~\ref{prop:existencedeformedschur}}\label{section32}

Let us introduce
\be
\label{scaling}
H_m=\sqrt{\hbar^{m+2}}K_m,\qquad U_k=\sqrt{\hbar} V_k,\qquad \sigma:=-\frac{\epsilon^2}{\sqrt\hbar},
\ee
so that \eqref{qkdvHamiltonians1} and \eqref{qkdvHamiltonians2} read as
\begin{align}
K_{-1}&=V_0,
\\
\label{scaling2}
K_m&=\sum_{\begin{smallmatrix}g\geq 0,\ n\geq 0 \\ 2g-2+n\geq 0\end{smallmatrix}}\sum_{\begin{smallmatrix}k_1,\dots,k_n\in\Z \\ k_1+\cdots+k_n=0\end{smallmatrix}}\frac{\sigma^{2g-2+n-m}}{n!}\left(\int_{DR_g(0,k_1,\dots,k_n)}\psi_1^m\lambda_{2g-2+n-m}\right): \wh{V}_{k_1}\cdots \wh{V}_{k_n}: \quad (m\geq 0),
\end{align}
where we have used the dimensional constraint $m+\ell=2g-2+n$ in \eqref{qkdvHamiltonians2} (cf. Remark \ref{remarkhomogeneous}).
Let us also introduce
\be
q_k=\sqrt{\hbar} T_k,\quad U_0=\sqrt{\hbar}V_0.
\ee
Then the operators $\wh V_k$ act on $\C[T_1,T_2,\dots]$. Explicitly,
\be
\wh V_k:f(T)\mapsto\begin{cases} T_k f(T), & k>0, \\ V_0 f(T), & k=0,\\  -k\frac{\pa f(T)}{\pa T_{-k}}, & k<0.\end{cases}
\ee

In particular $H^{[0]}_m=\sqrt{\hbar^{m+2}}K_m^{[0]}$, where $K_m^{[0]}:=K_m|_{\s=0}$; 
therefore Dubrovin's result (cf.~Theorem~\ref{thmdubrovin}) states that
\be
K_m^{[0]}s_\ll(T)=F^{[0]}_m(\ll;V_0)s_\ll(T),
\ee
where $s_\ll(T)$ are the Schur polynomials \eqref{JacobiTrudi}, and 
\begin{align}
F^{[0]}_m(\ll;V_0)&=\frac 1{\sqrt{\hbar^{m+2}}}E_m^{[0]}(\ll;\hbar,\sqrt\hbar V_0) 
=\sum_{j=0}^{m+2}\frac{V_0^{m+2-j}}{(m+2-j)!}Q_j(\ll), \label{dispersionlesseigenvaluesscaled}
\end{align}
which is also independent of~$\hbar$.
The property \eqref{2} in the statement of Proposition \ref{prop:existencedeformedschur} is then equivalent to
\be
\label{22}
K_m r_\ll\left(T;\sigma\right)=F_m\left(\ll;\sigma,V_0\right)r_\ll\left(T;\sigma\right),
\ee
for some $F_m(\ll;\sigma,V_0)\in\C\left(V_0\right)\llbracket\s\rrbracket$; 
the relation with $E_m(\ll;\hbar,\e,U_0)$ in the statement of Proposition~\ref{prop:existencedeformedschur} is
\be
E_m(\ll;\hbar,\e,U_0)=\sqrt{\hbar^{m+2}}F_m\biggl(\ll;\s=-\frac{\epsilon^2}{\sqrt\hbar},V_0=\frac{U_0}{\sqrt\hbar}\biggr)
\ee

Therefore, let us first prove that \eqref{1} and \eqref{2} (the latter equivalently rewritten as~\eqref{22}) imply~\eqref{3}. Indeed, looking at \eqref{22} for $\s=0$ we obtain that $r_\ll(T;\s=0)$ is a basis on which $H_m^{[0]}$ are diagonal, and it follows from Theorem~\ref{thmdubrovin} 
that  $r_\ll(T;\s=0)$ must be a linear combination of the Schur polynomials $s_\mu(T)$ for which $E_m^{[0]}(\ll)=E_m^{[0]}(\mu)$ for all $m\geq -1$; Lemma~\ref{nondegeneracylemma} implies then that $r_\ll(T;\s=0)=c \, s_\ll(T)$, for some $c\in\C$, and $E_m(\ll;\s=0)=E_m^{[0]}(\ll)$. Finally $c=1$ due to \eqref{1}.

Let us proceed and prove the existence and the uniqueness of~$r_\ll(T;\s)$. 
Denote~$r_\ll(T;\s)=\sum_{k\geq 0}r_\ll^{[k]}(T)\s^k$ and $F_m(\ll;\s)=\sum_{k\geq 0} F_m^{[k]}(\ll)\s^k$; we will show existence and uniqueness of the coefficients $r_\ll^{[k]}(T)$ and $F_m^{[k]}$ for $k\geq 0$, such that \eqref{1} and \eqref{2} hold true. 

The case $k=0$, namely that $r_\ll^{[0]}(T)=s_\ll(T)$ and that $F_m^{[0]}(\ll)$ is given by \eqref{dispersionlesseigenvaluesscaled}, follows from Dubrovin's Theorem \ref{thmdubrovin}, as explained above.
Next, consider \eqref{22} at order $\s^k$ for some $k\geq 1$ and take the inner product, using \eqref{eq:schurorthogonal}, 
with $s_\mu(T)$ (for $|\mu|=|\ll|$)
\be
\sum_{j=0}^k\left\langle s_\mu (T),K_m^{[j]}r_\ll^{[k-j]}(T)\right\rangle=\sum_{j=0}^k F_m^{[j]}(\ll)\left\langle s_\mu(T),r_{\ll}^{[k-j]}(T)\right\rangle.
\ee
Denote $\left\langle r_\ll^{[\ell]}(T),s_\ll(T)\right\rangle=r_{\ll\mu}^{[\ell]}$; the normalization \eqref{1} implies that $r_{\ll\ll}^{[\ell]}=\delta_{\ell,0}$. Exploiting the symmetry property of the $H_m$ (Corollary \ref{corollarysymmetry}) we infer
\be 
\sum_{j=0}^k\left\langle K_m^{[j]} s_\mu (T),r_\ll^{[k-j]}(T)\right\rangle=\sum_{j=0}^k F_m^{[j]}(\ll)r_{\ll\mu}^{[k-j]},
\ee
and by separating terms corresponding to $j=0$ in the sums, we obtain
\be
\left(F_m^{[0]}(\ll)-F_m^{[0]}(\mu)\right)r_{\ll\mu}^{[k]}=\sum_{j=1}^k\left\langle K_m^{[j]}s_\mu (T),r_\ll^{[k-j]}(T)\right\rangle-\sum_{j=1}^k F_m^{[j]}(\ll)r_{\ll\mu}^{[k-j]}.
\ee
Write then the last relation in the case $\mu=\ll$
\be
\label{orderbyorder1}
F_m^{[k]}(\ll)=\sum_{j=1}^k\sum_{\nu\in\Y_{|\ll|}}\left\langle K_m^{[j]}s_\ll (T),s_\nu(T)\right\rangle r_{\ll\nu}^{[k-j]},
\ee
and in the case $\mu\not=\ll$
\be
\label{orderbyorder2}
\left(F_m^{[0]}(\ll)-F_m^{[0]}(\mu)\right)r_{\ll\mu}^{[k]}=\sum_{j=1}^k\left\langle K_m^{[j]}s_\mu (T),r_\ll^{[k-j]}(T)\right\rangle-\sum_{j=1}^{k-1} F_m^{[j]}(\ll)r_{\ll\mu}^{[k-j]}.
\ee
For any $w\geq 0$, let us first consider $m_*\geq -1$ such that $F_{m_*}^{[0]}(\lambda)\not=F_{m_*}^{[0]}(\mu)$ whenever $\lambda,\mu$ are distinct partitions of $w$ (cf. Lemma \ref{nondegeneracylemma}).
Since we already know $F_{m_*}^{[0]}$ and $r_{\lambda\mu}^{[0]}$, we can use \eqref{orderbyorder1} by induction on $k$ to obtain $F_{m_*}^{[k]}(\lambda)$ for all $k\geq 0$ and all $\lambda$ partitions of $w$ and to obtain $r_{\ll\mu}^{[k]}$ for all $k\geq 0$ and all distinct partitions $\ll\not=\mu$ of $w$.
This proves uniqueness of $r_\lambda(T;\s)$ and $F_{m_*}(\lambda;\s)$; moreover, by construction, we conclude that \eqref{22} with $m=m_*$ holds true for all $\lambda$ partitions of $w$.
Next, for $\lambda\not=\mu$ being any distinct partitions of~$w$,  we have, exploiting symmetry of the Hamiltonians,
$$
F_{m_*}(\mu)\left\langle r_\lambda,r_\mu\right\rangle
=\left\langle r_\lambda,K_{m_*}r_\mu\right\rangle
=\left\langle K_{m_*} r_\lambda,r_\mu\right\rangle
=F_{m_*}(\lambda)\left\langle r_\lambda,r_\mu\right\rangle,$$
which proves, thanks to Lemma \ref{nondegeneracylemma}, that $\left\langle r_\lambda,r_\mu\right\rangle$ vanishes. Thus $r_\lambda(T;\s)$ form a basis of the free module $\widetilde{\Lambda}\otimes_\C\C\llbracket\sigma\rrbracket$.
Finally, for any $m$, by commutativity and symmetry of the Hamiltonians,
$$
F_{m_*}(\mu)\left\langle K_m r_\lambda,r_\mu\right\rangle
=\left\langle K_m r_\lambda,K_{m_*}r_\mu\right\rangle
=\left\langle K_{m_*} K_m r_\lambda,r_\mu\right\rangle
=\left\langle  K_m K_{m_*}r_\lambda,r_\mu\right\rangle
=F_{m_*}(\lambda)\left\langle K_m r_\lambda,r_\mu\right\rangle,
$$
from which, using again Lemma \ref{nondegeneracylemma}, we conclude that $\left\langle K_m r_\lambda,r_\mu\right\rangle$ vanishes for all $\lambda\not=\mu$. This implies, as $r_\lambda(T;\s)$ form a basis, that
$$
K_m r_\lambda(T;\s)=F_m(\lambda;\s)r_\lambda(T;\s)
$$
for some scalar-valued series $F_m(\lambda;\s)=\sum_{k\geq 0}F_m^{[k]}(\lambda)\s^k$. 
The proof is complete. $\hfill\square$

In the above proof, the nondegeneracy property given in Lemma~\ref{nondegeneracylemma} 
is used. We note that a weaker form of the nondegeneracy property 
(namely that for a fixed $m\ge 1$, if $E_m^{[0]}(\lambda)=E_m^{[0]}(\mu)$ for some $\lambda\neq \mu$ 
then there exists $k>m$ such that $E_k^{[0]}(\lambda)\neq E_k^{[0]}(\mu)$)
is actually sufficient for the proof, and that we can prove this weaker property for the quantum KdV Hamiltonians $H_m$ even 
when $H_m$ are restricted to $U_0=0$. This could also be helpful for other quantum integrable systems.  

A natural question for the quantum KdV hierarchy is 
about the analyticity in~$\sigma$ of their spectrum and common eigenvectors. 
A way to answer this question is to look at the spectral curves~\eqref{spectralcurves} (see the Introduction). 
Indeed, the spectral curves are ramified coverings of the Riemann sphere~$\s\in\mathbb P^1$ of degree~$|\Y_k|$. 
So the eigenvalues $F_m(\lambda;\sigma)$ above are the Taylor series at $\sigma=0$ of the 
branches of the analytic functions~$\rho$ on~$\Sigma_{k,m}$.
Similarly, $r_{\lambda\mu}(\sigma)$  
are meromorphic functions on the 
spectral curves~$\Sigma_{k,m}$. 
The global geometry of the spectral curves is interesting. 
For example, it is possible to prove that at $\s=\infty$ the eigenvectors are the monomials $T_\ll=T_{\ll_1}\cdots T_{\ll_{\ell(\ll)}}$ 
and the corresponding branches of~$\rho$ behave as 
\be
\rho \sim \s^{m}\sum_{i=1}^{\ell(\ll)}\ll_i^{2m+1}, \quad \sigma\to\infty,
\ee
while, from Theorem~\ref{thmdubrovin} we know that at $\s=0$ 
the eigenvectors are the Schur polynomials $s_\ll(T)$;
thus, such a deformation of Schur polynomials interpolates between the Schur and monomial bases of $\wt\Lambda$.
Moreover, due to the symmetry property of the Hamiltonians, the branch points can only be complex conjugate points in the $\s$-plane; coincidence of eigenvalues for real values of $\s$ yields instead singularities of the spectral curves. Further study of the geometry of spectral curves is deferred to future investigations.

\begin{remark}
The initial value problem
\begin{align}
\hbar\frac{\pa}{\pa y_m}\Psi(\underline{y})&=H_m\Psi(\underline{y}),\quad m\geq 0,
\\
\Psi|_{\underline{y}=0}&=\sum_{\ll\in\Y} c_\ll r_\ll\biggl(\frac q{\sqrt\hbar};-\frac{\e^2}{\sqrt\hbar}\biggr),
\end{align}
where $c_\ll$ are arbitrarily given constants and $\underline{y}=(y_0,y_1,y_2,\dots)$, can be solved as
\be
\Psi(\underline{y})=\sum_{\ll\in\Y} c_\ll{\rm e}^{\frac 1\hbar\sum_{m\geq 0} y_m E_m(\ll;\e,\hbar,U_0)}r_\ll\biggl(\frac q{\sqrt\hbar};-\frac{\e^2}{\sqrt\hbar}\biggr).
\ee
\end{remark}

The classical property
\be
\label{schurconjugate}
s_{\ll'}(T)=(-1)^{|\ll|}s_{\ll}(-T)
\ee
of the Schur polynomials, where $\ll'$ denotes the conjugate partition of~$\ll$ 
(i.e. the diagram of~$\ll'$ is obtained by flipping that of~$\lambda$ along its main diagonal), 
generalizes to a similar property of the deformed Schur polynomials.
\begin{shaded}
\begin{proposition}
\label{propositionconjugate}
For all $\ll\in\Y$ we have
\be
r_{\ll'}(T;\s)=(-1)^{|\ll|}r_{\ll}(-T;-\s).
\ee
\end{proposition}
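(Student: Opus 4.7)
The plan is to apply the uniqueness part of Proposition~\ref{prop:existencedeformedschur} with $\lambda$ replaced by $\lambda'$: I set
\[
\tilde r_\lambda(T;\sigma) := (-1)^{|\lambda|} r_\lambda(-T;-\sigma)\in\widetilde{\Lambda}\otimes_\C\C\llbracket\sigma\rrbracket
\]
and try to verify (a) the normalization $\langle\tilde r_\lambda, s_{\lambda'}\rangle=1$ and (b) that $\tilde r_\lambda$ is a common eigenvector of every $K_m$ with eigenvalues in $\C(V_0)\llbracket\sigma\rrbracket$. Uniqueness will then force $\tilde r_\lambda=r_{\lambda'}$, which is the asserted identity.

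Part (a) is immediate from \eqref{schurconjugate} together with the invariance of $\langle\cdot,\cdot\rangle$ under $T\mapsto -T$, easily verified on the Schur basis: $\langle s_\mu(-T), s_\nu(-T)\rangle=(-1)^{|\mu|+|\nu|}\delta_{\mu'\nu'}=\delta_{\mu\nu}$. Therefore
\[
\langle\tilde r_\lambda(T;\sigma), s_{\lambda'}(T)\rangle=\langle r_\lambda(T;-\sigma), s_\lambda(T)\rangle=1
\]
by applying \eqref{1} at $-\sigma$.

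For part (b) the key step I will establish is the operator identity
\[
U\circ K_m(\sigma,V_0)\circ U=(-1)^m K_m(-\sigma,-V_0),\qquad m\geq-1,
\]
where $U:f(T)\mapsto f(-T)$ is the involution implementing $T\mapsto -T$. I will prove this by direct inspection of \eqref{scaling2}. One checks that $U\wh V_k U=-\wh V_k$ for $k\neq 0$ and $U\wh V_0 U=V_0$, so conjugation by $U$ multiplies each normally ordered monomial $:\wh V_{k_1}\cdots\wh V_{k_n}:$ by $(-1)^{n-n_0}$, where $n_0:=\#\{i:k_i=0\}$. Simultaneously $\sigma\to-\sigma$ contributes $(-1)^{2g-2+n-m}=(-1)^{n-m}$ through the prefactor $\sigma^{2g-2+n-m}$, and $V_0\to-V_0$ contributes $(-1)^{n_0}$ from the $\wh V_0=V_0$ factors; the three signs collapse into the uniform factor $(-1)^{2n-m}=(-1)^m$, independent of $g$, $n$, and $n_0$.

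Granted this identity, the chain of manipulations
\[
K_m(\sigma,V_0)\tilde r_\lambda(T;\sigma)=(-1)^{|\lambda|+m}U K_m(-\sigma,-V_0) r_\lambda(T;-\sigma)=(-1)^m F_m(\lambda;-\sigma,-V_0)\,\tilde r_\lambda(T;\sigma)
\]
(invoking \eqref{22} with parameters $(-\sigma,-V_0)$ in the second step) completes part (b) and delivers, as a byproduct, the eigenvalue relation $F_m(\lambda';\sigma,V_0)=(-1)^m F_m(\lambda;-\sigma,-V_0)$. The main obstacle is the sign bookkeeping above: flipping only $\sigma$ and $T$ would leave a residual $(-1)^{m+n_0}$ depending on the number of zero-momentum insertions, and it is essential that the simultaneous flip $V_0\to-V_0$ restores uniformity. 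This is precisely the step where the $V_0$-independence of $r_\lambda$ guaranteed by Proposition~\ref{prop:existencedeformedschur} is used in an essential way, as it allows me to apply the eigenvalue equation at the twisted parameter $(-\sigma,-V_0)$ and still recover, through $\tilde r_\lambda$, a genuine spectral statement at $(\sigma,V_0)$.
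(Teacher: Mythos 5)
Your proof is correct and follows essentially the same route as the paper: both verify the normalization via \eqref{schurconjugate}, establish a parity identity of the Hamiltonians under the simultaneous sign flip of $T$, $\sigma$ and the zero mode, and then invoke the uniqueness in Proposition~\ref{prop:existencedeformedschur}. Your explicit bookkeeping of the $V_0\mapsto-V_0$ flip is in fact slightly more careful than the paper's identity \eqref{identityproof}, which as literally written (with $\Pi$ acting only on the $q_i$) omits the accompanying $U_0\mapsto-U_0$ substitution needed to make the sign uniform on terms carrying odd powers of the zero mode.
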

\end{shaded}
\begin{proof}
Let us denote $\wt r_\ll(T;\s)=(-1)^{|\ll|}r_{\ll'}(-T;-\s)$; from \eqref{schurconjugate} we obtain
\be
\label{eq:1proof}
\left\langle \wt r_{\ll}(T;\s),s_{\ll}(T) \right\rangle=\left\langle r_{\ll'}(-T;-\s),s_{\ll'}(-T) \right\rangle=1,
\ee
where we also use the property~\eqref{1} of deformed Schur polynomials.
Next, it follows from Remark~\ref{remarkhomogeneous} that all quantum KdV Hamiltonians are homogeneous of degree $m+2$ with respect to the degree assignment
\be
\label{homogeneityproof}
\wt\deg\, q_k=1,\qquad \wt\deg\,\e=\frac 12,\qquad\wt\deg\,\hbar=2.
\ee
Let us denote by $H_m=H_m(\e,\hbar)$ the explicit dependence 
on the parameters and let us also introduce the involution $\Pi:\L\mapsto\L$ defined by $q_i\mapsto -q_i$ 
on the generators of the polynomial ring~$\L$.
Thus, from the homogeneity property just mentioned (see \eqref{homogeneityproof}) we have
\be
\label{identityproof}
\Pi H_m(\i\e,\hbar)\Pi=(-1)^{m}H_m(\e,\hbar).
\ee
It follows then that $H_m(\i\e,\hbar)$ is diagonal 
on $\Pi r_\ll(q/\sqrt\hbar;\e^2/\sqrt\hbar)=r_\ll(-q/\sqrt\hbar;\e^2/\sqrt\hbar)$, 
hence it is diagonal on $\wt r_\ll(q/\sqrt\hbar;-\e^2/\sqrt\hbar)$.
Thus the elements $\wt r_{\ll}(T;\s)$ satisfy the two defining properties \eqref{1} and \eqref{2} uniquely characterizing the deformed Schur polynomials (cf. Proposition \ref{prop:existencedeformedschur}) and so they must be equal to $r_\ll(T;\s)$.
\end{proof}

We give a few more examples of $r_\ll(T;\s)$.
{\footnotesize \begin{align*}
r_{(3)}&=s_{(3)}+s_{(2,1)}\left(-\frac{2\s}{9}+\frac{\s^2}{324}+\frac{43 \s^3}{5832}
+\frac{193 \s^4}{559872}+\O(\s^{5})\right)+s_{(1,1,1)}\left(
\frac{5 \s}{72}+\frac{2 \s^2}{81}-\frac{893 \s^3}{373248}-\frac{115 \s^4}{69984}+\O(\s^{5})\right),
\\
r_{(2,1)}&=s_{(3)}\left(\frac{2 \s}{9}+\frac{\s^2}{81}-\frac{2 \s^3}{729}-\frac{\s^4}{729}+\O(\s^{5})\right)
+s_{(2,1)}
+s_{(1,1,1)}\left(-\frac{2 \s}{9}+\frac{\s^2}{81}+\frac{2 \s^3}{729}-\frac{\s^4}{729}+\O(\s^{5})\right),
\end{align*}}

\vspace{-6mm}

{\footnotesize \begin{align*}
r_{(4)}&=s_{(4)}+s_{(3,1)}\left(-\frac{5 \s}{16}+\frac{\s^2}{192}+\frac{6055 \s^3}{331776}+\O(\s^4)\right)
+s_{(2,2)}\left(-\frac{5 \s}{72}+\frac{59 \s^2}{2592}+\frac{4715 \s^3}{1492992}+\O(\s^4)\right)
\\
&\qquad
+s_{(2,1,1)}\left(\frac{\s}{8}+\frac{37 \s^2}{768}-\frac{727 \s^3}{82944}+\O(\s^4)\right)
+s_{(1,1,1,1)}\left(-\frac{7 \s}{144}-\frac{95 \s^2}{2592}-\frac{9119 \s^3}{2985984}+\O(\s^4)\right),
\\
r_{(3,1)}&=s_{(4)}\left(\frac{5 \s}{16}+\frac{\s^2}{32}-\frac{7 \s^3}{4096}+\O(\s^4)\right)
+s_{(3,1)}
+s_{(2,2)}\left(-\frac{\s}{8}-\frac{\s^2}{32}-\frac{13 \s^3}{2048}+\O(\s^4)\right)
\\
&\qquad
+s_{(2,1,1)}\left(-\frac{5 \s}{16}+\frac{3 \s^2}{64}+\frac{35 \s^3}{4096}+\O(\s^4)\right)
+s_{(1,1,1,1)}\left(\frac{\s}{8}+\frac{11 \s^2}{256}-\frac{7 \s^3}{1024}+\O(\s^4)\right),
\\
r_{(2,2)}&=s_{(4)}\left(\frac{5 \s}{72}+\frac{37 \s^2}{1296}-\frac{133 \s^3}{46656}+\O(\s^4)\right)
+s_{(3,1)}\left(\frac{\s}{8}-\frac{\s^2}{48}-\frac{31 \s^3}{5184}+\O(\s^4)\right)
+s_{(2,2)}
\\
&\qquad
+s_{(2,1,1)}\left(-\frac{\s}{8}-\frac{\s^2}{48}+\frac{31 \s^3}{5184}+\O(\s^4)\right)
+s_{(1,1,1,1)}\left(-\frac{5 \s}{72}+\frac{37 \s^2}{1296}+\frac{133 \s^3}{46656}+\O(\s^4)\right).
\end{align*}}

\subsection{Vanishing identities}\label{sectionvanishingid}

As a consequence of Proposition~\ref{prop:existencedeformedschur}, we get some vanishing identities for certain combination of characters in the symmetric group. For convenience, let us denote by~$\Y_k$ the set of partitions $\ll$ of weight $|\ll|=k$ and set
\be
\wt H_m^{[i]}:=\left.H_m^{[i]}\right|_{U_0=0},\qquad\wt E^{[0]}_m(\ll;\hbar)=\left.E^{[0]}_m(\ll;\hbar,U_0)\right|_{U_0=0}.
\ee
By \eqref{dispersionlesseigenvalues} we infer that $\wt E^{[0]}$ is given by
\be
\label{EQ}
\wt E^{[0]}_m(\ll;\hbar)=\sqrt{\hbar^{m+2}}Q_{m+2}(\ll),
\ee
where $Q_j$ are given in \eqref{Q}.

\begin{lemma}\label{vanishing1}
Suppose $\ll,\mu\in\Y_k$ are distinct partitions of the same weight, $\ll\not=\mu$, satisfying $\wt E_m^{[0]}(\ll;\hbar)=\wt E_m^{[0]}(\mu;\hbar)$.
Then
\be
\left\langle s_\ll\left(\frac q{\sqrt\hbar}\right),\wt H_m^{[1]}s_\mu\left(\frac q{\sqrt\hbar}\right)\right\rangle=0.
\ee
\end{lemma}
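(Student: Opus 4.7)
\textbf{Proof plan for Lemma \ref{vanishing1}.} The strategy is to exploit the commutativity $[H_m,H_n]=0$ for all $m,n\geq-1$ expanded in powers of~$\epsilon^2$, combined with the symmetry (Corollary~\ref{corollarysymmetry}) and the nondegeneracy of the dispersionless spectrum at $U_0=0$. Writing $H_m=\sum_{g\geq 0}\epsilon^{2g}H_m^{[g]}$ and extracting the coefficient of $\epsilon^2$ from $[H_m,H_n]=0$ yields the identity
\be
\bigl[H_m^{[0]},H_n^{[1]}\bigr]+\bigl[H_m^{[1]},H_n^{[0]}\bigr]=0,
\ee
an identity of operators on $\Lambda$ depending on the parameters $\hbar,U_0$. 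Since $U_0$ enters these operators only as a scalar parameter (it acts as $U_0\cdot\mathrm{id}$, cf.~the definition of the canonical quantization), this identity is preserved upon setting $U_0=0$, giving $\bigl[\wt H_m^{[0]},\wt H_n^{[1]}\bigr]+\bigl[\wt H_m^{[1]},\wt H_n^{[0]}\bigr]=0$.

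Next, I would apply this operator identity to $s_\mu(q/\sqrt\hbar)$ and take the sesquilinear pairing with $s_\ll(q/\sqrt\hbar)$. Using Theorem~\ref{thmdubrovin} at $U_0=0$, namely $\wt H_k^{[0]} s_\nu=\wt E_k^{[0]}(\nu;\hbar)s_\nu$, together with the self-adjointness of each $\wt H_k^{[0]}$ (which follows from Corollary~\ref{corollarysymmetry}, since the coefficients in the $\epsilon^{2g}$-expansion inherit reality), each commutator contributes a scalar factor. Specifically,
\begin{align*}
\Bigl\langle s_\ll,\bigl[\wt H_m^{[0]},\wt H_n^{[1]}\bigr]s_\mu\Bigr\rangle&=\bigl(\wt E_m^{[0]}(\ll)-\wt E_m^{[0]}(\mu)\bigr)\bigl\langle s_\ll,\wt H_n^{[1]}s_\mu\bigr\rangle,\\
\Bigl\langle s_\ll,\bigl[\wt H_m^{[1]},\wt H_n^{[0]}\bigr]s_\mu\Bigr\rangle&=-\bigl(\wt E_n^{[0]}(\ll)-\wt E_n^{[0]}(\mu)\bigr)\bigl\langle s_\ll,\wt H_m^{[1]}s_\mu\bigr\rangle,
\end{align*}
so the identity becomes
\be
\bigl(\wt E_m^{[0]}(\ll)-\wt E_m^{[0]}(\mu)\bigr)\bigl\langle s_\ll,\wt H_n^{[1]}s_\mu\bigr\rangle=\bigl(\wt E_n^{[0]}(\ll)-\wt E_n^{[0]}(\mu)\bigr)\bigl\langle s_\ll,\wt H_m^{[1]}s_\mu\bigr\rangle,
\ee
where I am writing Schur polynomials and eigenvalues with the $q/\sqrt\hbar$ and $\hbar$ arguments suppressed.

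To conclude, I would invoke the $U_0=0$ version of the nondegeneracy statement. By~\eqref{EQ}, $\wt E_n^{[0]}(\nu;\hbar)=\sqrt{\hbar^{n+2}}Q_{n+2}(\nu)$, and the argument in the proof of Lemma~\ref{nondegeneracylemma} (which is based on the fact that the generating series of the $P_j$'s determines the partition) already shows that the family $\{Q_j\}_{j\geq 0}$ separates any two distinct partitions of the same weight; this is the weaker nondegeneracy alluded to by the authors after the proof of Proposition~\ref{prop:existencedeformedschur}. Hence there exists $n\geq-1$ such that $\wt E_n^{[0]}(\ll;\hbar)\neq \wt E_n^{[0]}(\mu;\hbar)$. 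Plugging such an $n$ into the displayed identity, and using the hypothesis that $\wt E_m^{[0]}(\ll;\hbar)=\wt E_m^{[0]}(\mu;\hbar)$ to kill the left-hand side, forces $\bigl\langle s_\ll,\wt H_m^{[1]}s_\mu\bigr\rangle=0$, which is the claim. The main subtlety, though minor, is to justify that the self-adjointness and commutativity properties pass correctly to each graded piece $\wt H_m^{[g]}$ after the $U_0=0$ restriction; this is where I would be most careful, but once granted it the proof is essentially a two-line manipulation of commutators.
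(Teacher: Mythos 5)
Your proposal is correct and follows essentially the same route as the paper: extract the $\epsilon^2$-coefficient of $[H_m,H_{m'}]=0$ at $U_0=0$, take the $(\ll,\mu)$ matrix entry using self-adjointness and Dubrovin's diagonalization, and then choose $m'$ via the nondegeneracy argument of Lemma~\ref{nondegeneracylemma} to isolate $\bigl\langle s_\ll,\wt H_m^{[1]}s_\mu\bigr\rangle$. The "subtlety" you flag is indeed harmless for exactly the reason you give ($U_0$ and $\epsilon$ enter only as scalar parameters), and the paper passes over it silently.
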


\begin{proof}
Taking terms of order $\epsilon$ in the identity $[H_m,H_{m'}]=0$ we have, after setting $U_0=0$,
\be
0=[\wt H_m^{[0]},\wt H_{m'}^{[1]}]+[\wt H_m^{[1]},\wt H_{m'}^{[0]}].
\ee
Taking the matrix entry $(\ll,\mu)$ of this relation, using the symmetry property of Corollary \ref{corollarysymmetry}, and denoting $T=q/\sqrt\hbar$ we get
\begin{align}
0&=\left\langle s_\ll(T),\left([\wt H_m^{[0]},\wt H_{m'}^{[1]}]+[\wt H_m^{[1]},\wt H_{m'}^{[0]}]\right)s_\mu(T)\right\rangle \nonumber
\\&=\left\langle \wt H_m^{[0]}s_\ll(T),\wt H_{m'}^{[1]}s_\mu(T)\right\rangle
-\left\langle \wt H_{m'}^{[1]}s_\ll(T),\wt H_m^{[0]}s_\mu(T)\right\rangle
\nonumber\\
&\quad+\left\langle \wt H_m^{[1]}s_\ll(T),\wt H_{m'}^{[0]}s_\mu(T)\right\rangle
-\left\langle \wt H_{m'}^{[0]}s_\ll(T),\wt H_m^{[1]}s_\mu(T)\right\rangle \nonumber
\\&=\left(\wt E_m^{[0]}(\ll;\hbar)-\wt E_m^{[0]}(\mu;\hbar)\right)\left\langle s_\ll(T),\wt H_{m'}^{[1]}s_\mu(T)\right\rangle-
\left(\wt E_{m'}^{[0]}(\ll;\hbar)-\wt E_{m'}^{[0]}(\mu;\hbar)\right)\left\langle s_\ll(T),\wt H_m^{[1]}s_\mu(T)\right\rangle.
\end{align}
Now assume $\ll,\mu$ are as in the statement; using the first part in the proof of Lemma \ref{nondegeneracylemma}, we must have $\wt E_{m'}^{[0]}(\ll;\hbar)\not=\wt E_{m'}^{[0]}(\mu;\hbar)$ for some $m'$, hence we have $\left\langle s_\ll(\wt q),\wt H_m^{[1]}s_\mu(\wt q)\right\rangle=0$.
\end{proof}

Rephrasing the statement of Lemma~\ref{vanishing1} for $m=1$, we get the more explicit identity given in the following corollary.

\begin{shaded}
\begin{corollary}
\label{corollaryvanishing1}
Suppose $\ll,\mu\in\Y_k$ are distinct partitions, $\ll\not=\mu$, satisfying
$P_2(\ll)=P_2(\mu)$, where $P_2$ is defined in \eqref{P}. Then
\be
\label{final}
\sum_{\nu\in\Y_k}|C_\nu|\chi_\lambda(C_\nu)\chi_\mu(C_\nu)\sum_{i=1}^{\ell(\nu)} \nu_i^3=0,
\ee
where $C_\nu\subset\mathfrak S_k$ is the conjugacy class of permutations with disjoint cycles of lengths $\nu_1,\dots,\nu_{\ell(\nu)}$
and $\chi_\ll$ is the character of the irreducible representation of $\mathfrak S_k$ associated to $\ll$.
\end{corollary}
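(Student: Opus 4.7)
The plan is to specialize Lemma~\ref{vanishing1} to $m=1$ and then convert the resulting matrix element into the desired character sum via the Frobenius formula.

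First, I would match the hypothesis: by~\eqref{Q}, $Q_3(\ll)=P_2(\ll)/2!+\b_3$, so $P_2(\ll)=P_2(\mu)$ is equivalent to $Q_3(\ll)=Q_3(\mu)$, which by~\eqref{EQ} means $\wt E_1^{[0]}(\ll;\hbar)=\wt E_1^{[0]}(\mu;\hbar)$. Lemma~\ref{vanishing1} with $m=1$ therefore yields
\be
\left\langle s_\ll(T),\,\wt H_1^{[1]}\,s_\mu(T)\right\rangle=0,\qquad T:=q/\sqrt\hbar.
\ee
Reading off the $\e^2$-coefficient of the explicit formula for $H_1$ given in the Introduction and setting $U_0=0$ yields
\be
\wt H_1^{[1]}=-\frac{\hbar}{12}\sum_{k\geq 1}k^3\, q_k\frac{\pa}{\pa q_k}-\frac{\hbar}{2880}.
\ee
The additive constant contributes nothing since $\ll\neq\mu$ and $\langle s_\ll,s_\mu\rangle=0$. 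After the rescaling $q_k=\sqrt\hbar T_k$, only the operator $\mathcal L:=\sum_{k\geq 1}k^3 T_k\,\pa/\pa T_k$ survives (with an irrelevant nonzero prefactor), so it suffices to evaluate $\langle s_\ll(T),\mathcal L s_\mu(T)\rangle$.

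For this, the generating series $\sum_{k\geq 0}h_k(T)z^k=\exp\bigl(\sum_{j\geq 1}T_j z^j/j\bigr)$ stated in the Introduction identifies the $T_k$ with the classical power sums $p_k$, so the Frobenius character formula reads
\be
s_\mu(T)=\sum_{\nu\in\Y_{k}}\frac{\chi_\mu(C_\nu)}{z_\nu}\,T_\nu,\qquad T_\nu:=T_{\nu_1}\cdots T_{\nu_{\ell(\nu)}},\qquad z_\nu=\frac{k!}{|C_\nu|}.
\ee
Each monomial $T_\nu$ is an eigenvector of $\mathcal L$ with eigenvalue $\sum_i\nu_i^3$. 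Combining this with the Hall orthogonality $\langle T_\rho,T_\nu\rangle=z_\nu\delta_{\rho\nu}$ (which is exactly the rank-one rescaled version of~\eqref{defscalarproduct}) one obtains $\langle s_\ll(T),T_\nu\rangle=\chi_\ll(C_\nu)$, and hence
\be
\left\langle s_\ll(T),\mathcal L\,s_\mu(T)\right\rangle=\frac{1}{k!}\sum_{\nu\in\Y_k}|C_\nu|\,\chi_\ll(C_\nu)\chi_\mu(C_\nu)\sum_{i=1}^{\ell(\nu)}\nu_i^3,
\ee
and setting this to zero produces~\eqref{final}.

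The only obstacle in this plan is bookkeeping: checking that the sesquilinear form~\eqref{defscalarproduct}, restricted to the rank-one setting and rescaled via $q_k=\sqrt\hbar T_k$, really does coincide with the standard Hall inner product under the identification $T_k=p_k$, and carefully isolating $\wt H_1^{[1]}$ from the given expression for $H_1$ (so that the two $\e^2$-terms are accounted for without confusion with the dispersionless piece). Neither step involves any real difficulty, and all the non-trivial input has already been supplied by Lemma~\ref{vanishing1} and the first part of the proof of Lemma~\ref{nondegeneracylemma}.
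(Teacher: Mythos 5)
Your proposal is correct and follows essentially the same route as the paper's own proof: both reduce to Lemma~\ref{vanishing1} with $m=1$ after identifying $P_2(\ll)=P_2(\mu)$ with equality of the dispersionless eigenvalues, isolate $\wt H_1^{[1]}$ as a multiple of $\sum_k k^3 q_k\,\pa/\pa q_k$ plus an irrelevant constant, and convert the vanishing matrix element into the character sum via the Frobenius expansion of Schur polynomials and the Hall orthogonality \eqref{scalarscalarproduct}. The only cosmetic difference is that you read $\wt H_1^{[1]}$ off the explicit formula for $H_1$ in the Introduction, whereas the paper derives it from the density $h_1|_{U_0=0}$ quoted from Buryak--Rossi; the content is identical.
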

\end{shaded}

\begin{proof}
By using the explicit formula (see e.g. \cite[Section 4.1]{BR2016a})
\be
\left.h_1\right|_{U_0=0}=\frac{u^3}6+\frac{\e^2}{24}uu_{xx}-\frac{\hbar\e^2}{2880},
\ee
we obtain
\be
\left.H_1\right|_{U_0=0}=\Delta-\frac{\hbar\e^2}{12}\Biggl(\sum_{i\geq 1}i^3q_i\frac{\pa}{\pa q_i}+\frac{1}{240}\Biggr),
\ee
where $\Delta$ is the cut-and-join operator, see~\eqref{cutandjoin}. Hence, since $\ll\not=\mu$ are assumed to satisfy $P_2(\ll)=P_2(\mu)$, we have $\wt E_1^{[0]}(\ll;\hbar)=\wt E_1^{[0]}(\mu;\hbar)$, cf.~\eqref{Q} and~\eqref{EQ}, and so Lemma \ref{vanishing1} yields
\be
0=\left\langle s_\ll\biggl(\frac q{\sqrt\hbar}\biggr),\wt H_1^{[1]}s_\mu\biggl(\frac q{\sqrt\hbar}\biggr)\right\rangle=\left\langle 
s_\ll\biggl(\frac q{\sqrt\hbar}\biggr),\sum_{i\geq 1}i^3q_i\frac{\pa}{\pa q_i}s_\mu\biggl(\frac q{\sqrt\hbar}\biggr)\right\rangle.
\ee
By expanding the Schur polynomials on the basis of monomials with the help of 
\be
\label{schurvscharacters}
s_\ll\biggl(\frac q{\sqrt\hbar}\biggr)=\sum_{\nu\in\Y_k}\frac{|C_\nu| \, \chi_\ll(C_\nu)}{k!}\frac{q_\nu}{\hbar^{\ell(\nu)/2}},
\ee
(see \cite{M2015}) and by using \eqref{scalarscalarproduct}, the last equation can be rewritten as \eqref{final}.
\end{proof}


\section{Further remarks}\label{furtherremarks}
This paper is a first step in the study of the spectral problems
for the quantum KdV hierarchy following Dubrovin's suggestion~\cite{D2016,Dubrovin}.
In particular, we have proved that the quantum KdV Hamiltonians are symmetric with respect to a natural inner product
(this property is actually generalized and proved for any 
tautological CohFT; cf.~Corollary~\ref{corollarysymmetry}) 
and that the quantum dispersionless KdV hierarchy possesses a non-degeneracy property.
Moreover, we show that these two properties imply the existence of a complete set of common
eigenvectors for the quantum KdV Hamiltonians.  
As an application, we obtain some vanishing identities for certain combinations of characters in the symmetric group; 
a simple example of this phenomenon is given in Corollary \ref{corollaryvanishing1}.
More applications will be given in subsequent publications.

Let us note that 
Proposition \ref{prop:existencedeformedschur} can be generalized to the 
 quantum DR hierarchy associated to an arbitrary rank-1 CohFT (cf.~\cite{BDGR2019,BR2016a, DLYZ2016,T2012}). 
More precisely, recall that the quantum Hamiltonian densities for this hierarchy are
\be
h_{m}^{\sf Hodge}(\underline{s})=\sum_{\begin{smallmatrix}g,n\geq 0\\ 2g-2+n\geq 0\\ k_1,\dots,k_n\in\mathbb{Z}
\end{smallmatrix}}
\frac{\hbar^g}{n!}\left(\int_{DR_g\left(-\sum\limits_{i=1}^n k_i,k_1,\dots,k_n\right)}\psi_1^m\L\left(-\frac{\epsilon^2}{\hbar}\right)
{\rm e}^{\sum_{j\geq 1}{\rm ch}_{2j-1}s_{2j-1}}
\right) U_{k_1}\cdots U_{k_n} {\rm e}^{\i x\sum\limits_{i=1}^nk_i},
\ee
where $\underline{s}=(s_1,s_3,s_5,\dots)$ and ${\rm ch}_{2j-1}$ denotes the $(2j-1)$th 
component of the Chern character of the Hodge bundle on the moduli space of curves.
We have the following proposition, where we denote $H^{\sf Hodge}_m=\wh{\overline{h_{m}^{\sf Hodge}}}$ the associated quantum DR Hamiltonians

\begin{shaded}
\begin{proposition}
\label{prop:existencedeformedschurHodge}
There exists a unique collection $\left(r_\ll(T;\sigma,\underline{s})\right)_{\ll\in\Y}$ of elements in the free module $\C[T_1,T_2,\dots]\otimes_\C\C\llbracket\sigma,\underline{s}\rrbracket$ over $\C\llbracket\sigma,\underline{s}\rrbracket$ such that
\begin{align}
\label{1Hodge}
\left\langle r_\ll(T;\sigma,\underline{s}),s_\ll(T)\right\rangle&=1,
\\
\label{2Hodge}
H_m^{\sf Hodge} r_\ll\left(\frac q{\sqrt\hbar};-\frac{\e^2}{\sqrt\hbar},\underline{s}\right)&=E_m\left(\ll;\e,\hbar,\underline{s}\right)r_\ll\left(\frac q{\sqrt\hbar};-\frac{\e^2}{\sqrt\hbar},\underline{s}\right),
\end{align}
for some $E_m(\ll;\e,\hbar,\underline{s})\in\C(\sqrt\hbar)\llbracket\e^2,\underline{s}\rrbracket$. Moreover, $\left(r_\ll(T;\sigma,\underline{s})\right)_{\ll\in\Y}$ is basis of the free module $\C[T_1,T_2,\dots]\otimes_\C\C\llbracket\sigma,\underline{s}\rrbracket$, and 
\be
\label{3Hodge}
r_\ll(T;\sigma=0,\underline{s}=0)=s_\ll(T),\qquad E_m(\ll;\e=0,\hbar,\underline{s}=0)=E_m^{[0]}(\ll;\hbar),
\ee 
where $E_m^{[0]}$ are defined in \eqref{dispersionlesseigenvalues}.
\end{proposition}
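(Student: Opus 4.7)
The plan is to mirror the proof of Proposition~\ref{prop:existencedeformedschur}, upgrading the single-parameter formal expansion in~$\s$ to a joint formal expansion in~$\s$ and in the components of~$\underline s$. The crucial observation is that the Hodge density specializes to the Buryak--Rossi KdV density at $\underline s=0$: indeed $\exp\bigl(\sum_{j\geq1}\mathrm{ch}_{2j-1}s_{2j-1}\bigr)\big|_{\underline s=0}=1$, so $H_m^{\sf Hodge}\bigr|_{\underline s=0}=H_m$. Consequently, after the usual rescaling $q=\sqrt\hbar T$, $U_0=\sqrt\hbar V_0$, $\s=-\e^2/\sqrt\hbar$, the base point $(\s,\underline s)=(0,0)$ reduces to the quantum dispersionless KdV setting of Theorem~\ref{thmdubrovin}, whose eigenvectors are the Schur polynomials $s_\ll(T)$ and whose eigenvalues $E_m^{[0]}(\ll;\hbar)$ satisfy the separation property of Lemma~\ref{nondegeneracylemma}.

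Two key inputs are already in place. First, self-adjointness of $H_m^{\sf Hodge}$ with respect to the sesquilinear form~\eqref{defscalarproduct}: since the Hodge CohFT is tautological (its classes are polynomials in $\psi$- and $\ll$-classes) and its structure constants are rational, Corollary~\ref{corollarysymmetry} applies verbatim. Second, the commutativity $[H_m^{\sf Hodge},H_{m'}^{\sf Hodge}]=0$ established by Buryak--Rossi. With these in hand, the algebraic machinery of Section~\ref{section32} transports directly. Order by order, I would expand
\be
r_\ll(T;\s,\underline s)=\sum_{k,\underline n}r_\ll^{[k,\underline n]}(T)\s^k\underline s^{\underline n},\qquad E_m(\ll;\e,\hbar,\underline s)=\sum_{k,\underline n}E_m^{[k,\underline n]}(\ll)\s^k\underline s^{\underline n},
\ee
seeded by $r_\ll^{[0,\underline 0]}(T)=s_\ll(T)$ and $E_m^{[0,\underline 0]}(\ll)=E_m^{[0]}(\ll;\hbar)$ from Dubrovin's theorem. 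At any higher multi-order $(k,\underline n)\neq(0,\underline 0)$, substituting into~\eqref{2Hodge}, pairing with $s_\mu(T)$, and using self-adjointness to move the leading piece of $H_m^{\sf Hodge}$ onto $s_\mu$ produces, exactly as in~\eqref{orderbyorder1}--\eqref{orderbyorder2}, recursive formulas: the case $\mu=\ll$ determines $E_m^{[k,\underline n]}(\ll)$, and for $\mu\neq\ll$ the nondegeneracy of Lemma~\ref{nondegeneracylemma}, applied at a weight-dependent separating index $m_*$, makes the prefactor $E_{m_*}^{[0]}(\ll)-E_{m_*}^{[0]}(\mu)$ invertible and fixes $r_{\ll\mu}^{[k,\underline n]}=\langle r_\ll^{[k,\underline n]},s_\mu\rangle$. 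The normalization~\eqref{1Hodge} pins down $r_{\ll\ll}^{[k,\underline n]}=\delta_{k,0}\delta_{\underline n,\underline 0}$, giving existence and uniqueness. Finally, the closing paragraph of Section~\ref{section32}, which combines commutativity with self-adjointness, promotes the simultaneous-eigenvector property to all $m$ and yields the orthogonality $\langle r_\ll,r_\mu\rangle=0$ for $\ll\neq\mu$, hence the basis statement.

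The main obstacle I anticipate is purely bookkeeping: one must check that the sums defining~$h_m^{\sf Hodge}$ and its matrix elements in the Schur basis make sense as formal series in~$\underline s$, i.e.\ that for each $(k,\underline n)$ and each pair $(\ll,\mu)$ with $|\ll|=|\mu|$, only finitely many genera $g$ and tuples of Chern-character powers contribute to $\langle s_\mu,H_m^{\sf Hodge}\,s_\ll\rangle^{[k,\underline n]}$. This is enforced by the dimensional constraint on $\overline{\mathcal M}_{g,n+1}$, which balances $m$, the power of $\e^2$ extracted from $\L(-\e^2/\hbar)$, the total cohomological degree brought in by the monomial in the $\mathrm{ch}_{2j-1}$, and the dimension $3g-2+n$ of the moduli space, together with the fact that matrix elements at fixed weight involve only finitely many Fourier-mode monomials. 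Once this finiteness is verified, the perturbative scheme above proceeds mechanically and establishes all the claims of Proposition~\ref{prop:existencedeformedschurHodge}, including~\eqref{3Hodge}, which follows from the chosen seed values at $(\s,\underline s)=(0,\underline 0)$.
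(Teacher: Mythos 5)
Your proposal is correct and follows exactly the route the paper intends: the paper itself omits the proof of Proposition~\ref{prop:existencedeformedschurHodge}, stating only that it is analogous to that of Proposition~\ref{prop:existencedeformedschur}, and your argument is precisely that analogue, with the single deformation parameter $\s$ replaced by the joint formal parameters $(\s,\underline{s})$ and the same three inputs (self-adjointness from Corollary~\ref{corollarysymmetry}, commutativity of the quantum DR Hamiltonians, and the non-degeneracy of Lemma~\ref{nondegeneracylemma} at the base point). Your additional remarks on the specialization $H_m^{\sf Hodge}\big|_{\underline{s}=0}=H_m$ and on the finiteness of the formal expansion are correct and usefully fill in details the paper leaves implicit.
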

\end{shaded}

The proof of this proposition is analogous to that of Proposition~\ref{prop:existencedeformedschur} and is therefore omitted.

We end this paper by giving an equivalent form of Buryak--Rossi's quantum DR recursion in the rank-1 case.
Indeed, by using the dimensional constraint in \eqref{2Hodge}, we infer that $h_{m}^{\sf Hodge}$ is homogeneous of degree $m+2$ with respect to the grading
\be
\label{quantumKodamaHodge}
\wt\deg\,u=1,\qquad\wt{\deg}\,\e=\frac 12,\qquad\wt{\deg}\,\hbar=2,\qquad\wt{\deg}\,s_{2k-1}=-2k+1.
\ee
In particular, this implies the relation
\be
\biggl(\frac \e 2\pa_\e+2\hbar\pa_\hbar-\sum_{k\geq 1}(2k-1)s_{2k-1}\pa_{s_{2k-1}}+\sum_{k\geq 0}u_{kx}\pa_{u_{kx}}\biggr)h_m^{\sf Hodge}=(m+2)h_m^{\sf Hodge},
\ee
which can be combined with Buryak--Rossi's recursion \eqref{BRrec1} to give an alternative recursion
\be
\biggl(m+1+\frac \e 2\frac\pa{\pa\e}+\sum_{k\geq 1}(2k-1)s_{2k-1}\pa_{s_{2k-1}}\biggr)\pa_xh_m^{\sf Hodge}=\frac 1\hbar\left[ h_{m-1}^{\sf Hodge},H_1^{\sf Hodge}\right].
\ee
In terms of the generating function $\mathcal{H}^{\sf Hodge}(z):=\sum_{m\geq -1}z^{m+1}h_m^{\sf Hodge}$ it gives
\be\label{MCH}
\mathscr D\,\mathcal{H}^{\sf Hodge}=0,\qquad \mathscr D=z\frac\pa{\pa z}+\frac \e 2\frac\pa{\pa\e}+\sum_{k\geq 1}(2k-1)s_{2k-1}\pa_{s_{2k-1}}+\frac{z\pa_x^{-1}}{\hbar}\left[H_1^{\sf Hodge},\cdot\right].
\ee
We note that a similar equation to~\eqref{MCH} is also given in~\cite{BDGR2019} (cf.~Theorem~5.1 therein). Taking the $\hbar\to0$ limit in~\eqref{MCH} 
we find
\be
\mathscr D^{\sf cl} \,\mathcal{H}^{\sf Hodge,cl}=0,\qquad \mathscr D^{\sf cl}=z\frac\pa{\pa z}+\frac \e 2\frac\pa{\pa\e}+\sum_{k\geq 1}(2k-1)s_{2k-1}\pa_{s_{2k-1}}+z\pa_x^{-1}\left\{H_1^{\sf Hodge,cl},\cdot\right\}.
\ee

\appendix

\section{Quantum Hopf hierarchy and Young-Jucys-Murphy elements}\label{AppYJM}

As a consequence of Theorem~\ref{thmdubrovin} \cite{D2016} we prove in this appendix a representation of the quantum dispersionless KdV Hamiltonians $H_m^{[0]}$ in terms of multiplication operators in the class algebra of the symmetric groups.

With respect to the gradation $\deg q_k=k$, consider the direct sum decomposition $\L=\bigoplus_{k\geq 0}\Lambda_k$ into homogeneous components.
Denote by $\mathfrak S_k$ the symmetric group of permutations of $\{1,\dots,k\}$, by $\C[\mathfrak S_k]$ its group algebra, and introduce the \emph{Frobenius map}\footnote{It is convenient here to include the quantization parameter $\hbar$, though it may be absorbed by the scaling $T=q/\sqrt\hbar$.}
\begin{align}
\Phi:\C\left[\mathfrak S_k\right]&\to\L_k\left[1/\sqrt\hbar\right]
\\
\label{frobenius}
\s&\mapsto\frac{q_\ll}{\hbar^{\ell/2}k!}\, ,
\end{align}
defined here for permutations $\s$ with $\ell$ disjoint cycles of lengths $\ll_1.\dots,\ll_{\ell}$ and extended linearly, denoting by $q_\ll$ the monomial $q_\ll:=q_{\ll_1}\cdots q_{\ll_\ell}$.

It is well known \cite{G1994} that $\Phi^{-1}\Delta\Phi=\Phi^{-1}H_1^{[0]}\big|_{U_0=0}\Phi$, see \eqref{cutandjoin}, is the operator in the class algebra $Z(\C[\mathfrak S_k])$ of the symmetric group (the center of the group algebra $\C[\mathfrak S_k]$) given by multiplication by the formal sum of all transpositions.
Based on Dubrovin's result (Theorem \ref{thmdubrovin}) we can prove the following result, concerning an extension of this representation-theoretic interpretation to the entire sequence of the quantum dispersionless KdV Hamiltonians.
To this end, let us introduce the \emph{Young-Jucys-Murphy} (YJM) elements $\mathcal J_i\in\C[\mathfrak S_i]$ \cite{J1974,M1981,OV1996}, defined as the sum of all transpositions in $\mathfrak S_i$ which are not in $\mathfrak S_{i-1}$. Explicitly:
\be
\label{YJM}
\mathcal J_1=0,\qquad \mathcal J_i=(1,i)+(2,i)+\cdots+(i-1,i),\quad i>1.
\ee

\begin{shaded}
\begin{proposition}
\label{propYJM}
For any $k\geq 0$, we have the identity
\be\label{propYJMeq}
1+\sum_{m\geq -1}\left(\frac z{\sqrt\hbar}\right)^{m+2}\left.H^{[0]}_m\right|_{\L_k}={\rm e}^{z U_0/\sqrt\hbar}\left(\frac {z/2}{\sinh(z/2)}+2 z\sinh\left(\frac z2\right)\sum_{m\geq 0}\frac{z^m}{m!}\Phi\left(\mathcal J_1^m+\dots+\mathcal J_k^m\right)\Phi^{-1}\right)
\ee
in ${\rm End}_\C(\L_k)\otimes\C((\sqrt\hbar))\llbracket z\rrbracket$. Here $\mathcal J_i$ are the YJM elements \eqref{YJM} and $\Phi$ the Frobenius map \eqref{frobenius}.
\end{proposition}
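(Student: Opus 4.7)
The plan is to diagonalize both sides of~\eqref{propYJMeq} in the basis $\{s_\ll(q/\sqrt\hbar):\ll\in\Y_k\}$ of $\L_k$ and match the two resulting eigenvalues as power series in $z$. Theorem~\ref{thmdubrovin} provides the left-hand side immediately; the content of the proposition is to check that the right-hand side is diagonal in the same basis with the same eigenvalues.

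For the left-hand side, I would substitute~\eqref{dispersionlesseigenvalues} and rearrange (swapping the order of summation and reindexing $m+2-j\to k$) to obtain
\begin{equation*}
1+\sum_{m\geq -1}\left(z/\sqrt\hbar\right)^{m+2}E_m^{[0]}(\ll;\hbar,U_0)\;=\;e^{zU_0/\sqrt\hbar}\sum_{j\geq 0}Q_j(\ll)\,z^j.
\end{equation*}
This is the target eigenvalue for each $s_\ll(q/\sqrt\hbar)$.

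For the right-hand side, the first step is to show that $\Phi^{-1}(s_\ll)$ is central in $\C[\mathfrak S_k]$. Using the convention~\eqref{frobenius}, the class sum $\sum_{\sigma\in C_\nu}\sigma$ maps under $\Phi$ to $|C_\nu|q_\nu/(\hbar^{\ell(\nu)/2}k!)$, which after the identification $q_k/\sqrt\hbar\leftrightarrow p_k$ becomes $|C_\nu|p_\nu/k!$. Combined with the classical Frobenius formula $s_\ll=\sum_\nu \chi^\ll(C_\nu)|C_\nu|p_\nu/k!$, this yields $\Phi^{-1}(s_\ll)=\sum_{g\in\mathfrak S_k}\chi^\ll(g)\,g=(k!/\dim V_\ll)\,e_\ll$, a scalar multiple of the primitive central idempotent of the $\mathfrak S_k$-irrep $V_\ll$. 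I would then invoke the Jucys--Murphy--Okounkov--Vershik theorem: any symmetric polynomial in $\mathcal J_1,\dots,\mathcal J_k$ is central in $\C[\mathfrak S_k]$ and acts on $V_\ll$ by the same symmetric polynomial evaluated at the contents $c(i,j):=j-i$ of the boxes $(i,j)\in\ll$. Hence $\Phi(\mathcal J_1^m+\cdots+\mathcal J_k^m)\Phi^{-1}$ is diagonal on Schur polynomials, with eigenvalue $\sum_{(i,j)\in\ll}(j-i)^m$ on $s_\ll$.

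The combinatorial bridge to $P_j(\ll)$ is the content generating function identity
\begin{equation*}
\sum_{(i,j)\in\ll}e^{(j-i)z}\;=\;\frac{1}{e^z-1}\sum_{i=1}^{\ell(\ll)}\!\left(e^{(\ll_i-i+1)z}-e^{(-i+1)z}\right)\;=\;\frac{1}{2\sinh(z/2)}\sum_{j\geq 1}\frac{P_j(\ll)}{j!}z^j,
\end{equation*}
obtained by summing the geometric series $\sum_{j=1}^{\ll_i}e^{(j-i)z}$ in each row of $\ll$ and comparing with~\eqref{P}. Multiplying by $2z\sinh(z/2)$, adding the expansion $\frac{z/2}{\sinh(z/2)}=\sum_{j\geq 0}\beta_j z^j$, and using the definition~\eqref{Q} of $Q_j$ together with $\beta_0=Q_0=1$, one sees that the right-hand side of~\eqref{propYJMeq} acts on $s_\ll$ exactly as $e^{zU_0/\sqrt\hbar}\sum_{j\geq 0}Q_j(\ll)z^j$, matching the left-hand side. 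Since this holds for every $\ll\in\Y_k$, the operator identity~\eqref{propYJMeq} follows. No step presents a serious obstacle: the main delicate point is the bookkeeping of normalizations around the Frobenius map (in particular the factor $k!/\dim V_\ll$), but all ingredients are classical.
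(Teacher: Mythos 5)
Your proof is correct and follows essentially the same strategy as the paper's: both sides are diagonalized on the Schur basis via the content-eigenvalue theorem for symmetric functions of the Jucys--Murphy elements, and the resulting eigenvalue generating series is matched against Dubrovin's formula from Theorem~\ref{thmdubrovin}. The only difference is bookkeeping: you sum the contents row by row to reach the $P_j$/$Q_j$ form of the eigenvalues directly, whereas the paper organizes the content sum by Frobenius coordinates $(a_i|b_i)$ together with the Faulhaber--Bernoulli formula and the generating function~\eqref{Eliashbergformula}; the two computations are equivalent.
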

\end{shaded}

Let us remark that Rossi \cite[Theorem 5.3]{R2008} gives an alternative representation of these operators in terms of free fermions.

Before the proof, we note that for rank-1 CohFTs, the sesquilinear form \eqref{defscalarproduct} reduces to
\be
\left\langle q_\ll,q_\mu\right\rangle=\hbar^{\ell(\ll)}z_\ll\delta_{\ll,\mu},
\ee
where $z_\ll:=\prod_{i\geq 1}m_i(\ll)!i^{m_i(\ll)}$, denoting $m_i(\ll)$ the number of parts of $\ll$ which are equal to $i$, and $\ell(\ll)$ the length of $\ll$.
Equivalently, denoting $C_\ll\subset\mathfrak S_{|\ll|}$ the conjugacy class of permutations with disjoint cycles of lengths $\ll_1,\dots,\ll_{\ell(\ll)}$, we have
\be
\label{scalarscalarproduct}
\left\langle \frac{q_\ll}{\hbar^{\ell(\ll)/2}},\frac{q_\mu}{\hbar^{\ell(\mu)/2}}\right\rangle=\frac{|\lambda|!}{|C_\ll|}\delta_{\ll,\mu}.
\ee
This is the standard inner product on the space of symmetric polynomials \cite{M2015}; in particular the Schur basis is orthonormal
\be
\label{orthonormalityschur}
\left\langle s_\ll\biggl(\frac q{\sqrt\hbar}\biggr),s_\mu\biggl(\frac q{\sqrt\hbar}\biggr)\right\rangle=\delta_{\ll,\mu}.
\ee

\smallskip

\noindent{\it Proof of Proposition~\ref{propYJM}.}
Before entering the proof proper let us recall the following two fundamental properties of the YJM elements introduced in \eqref{YJM}.
\begin{itemize}
\item[{\em (i)}] The $\mathcal J_i$'s commute among themselves, though they are not in the class algebra $Z(\C[\mathfrak S_k])$ (the center of the group algebra); however the class algebra is generated by symmetric functions of the YJM elements.
\item[{\em (ii)}] Multiplication by a symmetric function $f(\mathcal J_1,\dots,\mathcal J_k)$ of the YJM elements is diagonal on the character basis $\{\chi_\ll\}_{\ll\in\Y_k}$ of the class algebra $Z(\C[\mathfrak S_k])$, with explicit eigenvalues 
\be
\label{eigenvalueYJM}
f(\mathcal J_1,\dots,\mathcal J_k)\chi_\ll=f(\{i-j\}_{(i,j)\in\ll})\chi_\ll, \qquad \ll\in\Y_k,
\ee
where $f(\{i-j\}_{(i,j)\in\ll})$ is the evaluation of the symmetric function $f$ at the \emph{contents} $i-j$ of the partition $\ll$.
Here the partition $\ll$ is identified with the set of $(i,j)\in\mathbb Z^2$ satisfying $1\leq i\leq\ell(\ll)$, $1\leq j\leq\ll_i$.
\end{itemize}

Therefore to prove \eqref{propYJMeq}, let us note that, since $\Phi\chi_\ll=s_\ll(q/\sqrt\hbar)$ \cite{M2015}, both sides act diagonally on the Schur basis, and so it suffices to prove that they have the same eigenvalues.

To this end it is convenient to employ the Frobenius notation for a partition $\ll=(\ll_1,\dots,\ll_\ell)$, which consists in denoting
\be
\ll=(a_1,\dots,a_d|b_1,\dots,b_d),
\ee
where $d:=\max\{i\geq 0:\ \ll_i\geq i\}$ is the length of the main diagonal in the Young diagram of $\ll$, and for $i=1,\dots,d$,
\be
a_i:=\ll_i-i,\qquad b_i=\ll'_i-i
\ee
are the number of cells to the right of the $i$th diagonal cell in the Young diagram of $\ll$ ($a_1>\dots>a_d\geq 0$), and the number of cells below the $i$th diagonal cell in the Young diagram of $\ll$ ($b_1>\dots>b_d\geq 0$), respectively; here, as before, we denoted $\ll'$ the conjugate partition.

Hereafter we denote $\ll=(a_1,\dots,a_d|b_1,\dots,b_d)$ in Frobenius notation, and using \eqref{eigenvalueYJM} we compute
\be
(\mathcal J_1^m+\dots+\mathcal J_k^m)\chi_\ll=\Biggl(\sum_{(i,j)\in\ll}(i-j)^m\Biggr)\chi_\ll
=\sum_{i=1}^{d}\Biggl(\delta_{m,0}+\sum_{\ell=1}^{a_i}\ell^m+(-1)^m\sum_{\ell=1}^{b_i}\ell^m\Biggr)\chi_\ll
\ee
and using the \emph{Faulhaber-Bernoulli formula} (for $m=0,1,2,\dots$)
\be
\sum_{\ell=1}^N\ell^m=\frac{F_{m+1}(N)}{m+1},\qquad F_m(x)=\sum_{j=0}^{m-1}(-1)^j\binom{m}{j}B_jx^{m-j},
\ee
we can write
\be
(\mathcal J_1^m+\dots+\mathcal J_k^m)\chi_\ll=\frac 1{m+1}\sum_{i=1}^{d}\left(\delta_{m,0}+F_{m+1}(a_i)+(-1)^mF_{m+1}(b_i)\right)\chi_\ll.
\ee
Multiplying by $z^m/m!$ and summing over $m\geq 0$ we obtain
\begin{align}
\nonumber
\sum_{m\geq 0}\frac{z^m}{m!}(\mathcal J_1^m+\dots+\mathcal J_k^m)\chi_\ll&=\sum_{m\geq 0}\frac{z^m}{(m+1)!}\sum_{i=1}^{d}\left(\delta_{m,0}+F_{m+1}(a_i)+(-1)^mF_{m+1}(b_i)\right)\chi_\ll
\\
\label{eigenvalue1}
&=\sum_{i=1}^{d}\left(1+\frac{{\rm e}^{za_i}-1}{1-{\rm e}^{-z}}+\frac{{\rm e}^{-zb_i}-1}{1-{\rm e}^z}\right)\chi_\ll,
\end{align}
where we have used the identity
\be
\sum_{m\geq 0}\frac{F_{m+1}(x)}{(m+1)!}z^m=\frac{{\rm e}^{zx}-1}{1-{\rm e}^{-z}}.
\ee
Finally, from the results of \cite{D2016}, recalled in Theorem \ref{thmdubrovin}, we have
\be
\label{eigenvalue2}
\mathcal{H}^{[0]}\left(\frac z{\sqrt\hbar}\right)s_\ll\left(\frac q{\sqrt\hbar}\right)={\rm e}^{z U_0/\sqrt\hbar}\left(\frac 1{S(z)}+z\sum_{i=1}^{d}\left({\rm e}^{z\left(a_i+\frac 12\right)}-{\rm e}^{-z\left(b_i+\frac 12\right)}\right)\right)s_\ll\left(\frac q{\sqrt\hbar}\right),
\ee
where the generating function $\mathcal H^{[0]}$ is given in \eqref{Eliashbergformula} and $S(z)$ is given in \eqref{S}.
Therefore both sides of \eqref{propYJMeq} are diagonal on Schur polynomials $s_\ll\left(q/\sqrt\hbar\right)$, and by the identity
\be
{\rm e}^{z\left(a+\frac 12\right)}-{\rm e}^{-z\left(b+\frac 12\right)}=2\sinh\left(\frac z2\right)\left(\frac{{\rm e}^{az}-1}{1-{\rm e}^{-z}}+\frac{{\rm e}^{-bz}-1}{1-{\rm e}^{z}}+1\right)
\ee
it is clear that the eigenvalues coincide, see \eqref{eigenvalue1} and \eqref{eigenvalue2}, and the proof is complete.
\hfill $\square$

\begin{remark}
In view of the identification of Prop.~\ref{propYJM}, Eliashberg's formula~\eqref{Eliashbergformula} matches with a formula of Lascoux and Thibon \cite{LT2001} expressing the action of multiplication in the class algebra by Newton polynomials of YJM elements in terms of differential operators.
\end{remark}

It remains an open question whether a similar description continues to hold for the quantum KdV hierarchy; 
in this respect it would be interesting to compare with the results of the recent
preprint~\cite{Blot} where Hurwitz numbers appear in connection with the quantum Witten-Kontsevich
series, a particular {\it quantum tau-function} for the quantum KdV hierarchy.

\end{document}